\newtheorem{theorem}{\textit{{Theorem}}}
\newtheorem{definition}{\textit{{Definition}}}
\newtheorem{remark}{\textit{{Remark}}}
\newtheorem{proof}{\textit{{Proof}}}
\newtheorem{example}{\textit{{Example}}}
\begin{document}

\title{Sequence Design for Cognitive CDMA Communications under Arbitrary Spectrum Hole Constraint}

\author{Su~Hu, Zilong~Liu, Yong~Liang~Guan, Wenhui~Xiong, Guoan~Bi, Shaoqian~Li
\thanks{Manuscript received date: Jan 3, 2014; Manuscript revised date: May 5, 2014. The work of Su Hu is supported by the National Basic Research Program of China 2013CB329001, Natural Science Foundation of China 61101090/61101093 and Fundamental Research Funds for the Central Universities ZYGX2012Z004. The work of Zilong Liu and Yong Liang Guan is supported by
the Advanced Communications Research Program DSOCL06271 from the Defense Research and Technology Office (DRTech), Ministry of Defence, Singapore.}
\thanks{Su Hu, Wenhui Xiong, and Shaoqian Li are with National Key Laboratory on Communications, University of Electronic Science and Technology of China, Chengdu, China. E-mail: \{husu, whxiong, lsq\}@uestc.edu.cn.}
\thanks{Zilong Liu, Yong Liang Guan, and Guoan Bi are with the School of Electrical and Electronic Engineering, Nanyang Technological University, Singapore. E-mail: \{zilongliu, eylguan, egbi\}@ntu.edu.sg.}
}

\maketitle

\begin{abstract}
To support interference-free quasi-synchronous code-division multiple-access (QS-CDMA) communication with low spectral density profile in a cognitive radio (CR) network, it is desirable to design a set of CDMA spreading sequences with zero-correlation zone (ZCZ) property. However, traditional ZCZ sequences (which assume the availability of the entire spectral band) cannot be used because their orthogonality will be destroyed by the spectrum hole constraint in a CR channel. To date, analytical construction of ZCZ CR sequences remains open. Taking advantage of the Kronecker sequence property, a novel family of sequences (called ``quasi-ZCZ" CR sequences) which displays zero cross-correlation and near-zero auto-correlation zone property under arbitrary spectrum hole constraint is presented in this paper. Furthermore, a novel algorithm is proposed to jointly optimize the peak-to-average power ratio (PAPR) and the periodic auto-correlations of the proposed quasi-ZCZ CR sequences. Simulations show that they give rise to single-user bit-error-rate performance in CR-CDMA systems which outperform traditional non-contiguous multicarrier CDMA and transform domain communication systems; they also lead to CR-CDMA systems which are more resilient than non-contiguous OFDM systems to spectrum sensing mismatch, due to the wideband spreading.
\end{abstract}

\begin{IEEEkeywords}
Cognitive Radio Systems, Spectrum Hole Constraint, Kronecker Sequence, Peak-to-Average Power Ratio (PAPR), Zero-Correlation Zone (ZCZ), Quasi-Synchronous Code-Division Multiple-Access (QS-CDMA).
\end{IEEEkeywords}

\section{Introduction}
Cognitive radio (CR) is considered to be a promising paradigm to provide the capability of using or sharing the spectrum in an opportunistic manner to solve the scarcity of available spectrum. The spectrum opportunity is defined as spectrum holes that are not being used by the designated primary users at a particular time in a particular geographic area \cite{Haykin}-\cite{Budiarjo}. An interesting problem in CR systems is how to support interference-free multiple access communications with low spectral density profile and anti-jamming capability\footnote{In a tactical multi-user communication environment, for instance, an interference-free system with anti-detection and anti-jamming capabilities is critical for maintaining a robust communication quality in a battle field.}. To this end, a possible approach is to employ spread-spectrum code-division multiple-access (CDMA) using sequences with zero-correlation zone (ZCZ) property. Here, a ZCZ refers to a window of zero auto- and cross- correlations centered around the in-phase timing position. Owing to this correlation property, a quasi-synchronous CDMA (QS-CDMA) system employing ZCZ sequences as the CDMA spreading codes is able to achieve interference-free performance provided that all of the received signals fall into the ZCZ \cite{Suehiro94}-\cite{Tang06}. In the literature, ZCZ sequences were first proposed in \cite{Fan99}, and have been well developed by many researchers \cite{Tang01}-\cite{Tang10}.

However, traditional ZCZ sequences cannot be applied directly in CR systems. This is because their design generally assumes the availability of the entire spectral band (rather than certain non-contiguous spectral bands in a CR system as specified by the spectrum hole constraint) for every ZCZ sequence\footnote{From now on, a sequence which satisfies the spectrum hole constraint in a CR system is also called a ``CR sequence", to distinguish it from the traditional sequences with no spectrum hole constraint. }. The ZCZ property of a traditional ZCZ sequence set will be damaged/lost if a spectrum hole constraint is imposed by spectral nulling\footnote{We will show this in \textit{Example 1} in Section II.B.}. The same can be said for other traditional sequences with good correlation properties, e.g., polyphase sequences with ideal impulse-like auto-correlations \cite{Zadoff}-\cite{Mow95}. Recognizing this design challenge, a numerical approach was adopted by He \emph{et al} to design unimodular CR sequences with low out-of-phase auto-correlations in CR radar systems \cite{He10}. Their work was followed by Tsai \emph{et al} for CR sequences with low auto-correlations and low peak-to-average power ratio (PAPR), using convex optimization and Gerchberg-Saxton (GS) algorithm \cite{Tsai11}. However, the approaches in \cite{He10} and \cite{Tsai11} may not be applicable for the construction of CR sequences with low/zero cross-correlations. Till now, systematic construction of CR sequences with low/zero auto- and cross- correlations remains open, to the authors' best knowledge.

The first main contribution of this paper is an analytical construction of sequences with zero cross-correlation zones (ZCCZs), called ``quasi-ZCZ" CR sequences, for zero multiuser interference (MUI) CR-CDMA communications. A salient feature of our proposed construction is that the ZCCZ property holds for every distinct pair of quasi-ZCZ sequences regardless of any spectrum hole constraint (which may vary with time and location). Also, a large portion of the auto-correlations within the ZCCZ are zero except for certain small time-shifts near the in-phase timing position. We remark that the proposed CR sequences in this paper is different from the sequences in our recent work \cite{Hu13} which are for transform domain communication system (TDCS) based cognitive radio networks. On the other hand, the sequences in \cite{Hu13} don't possess ZCCZ centered around the in-phase timing position. The key idea of our proposed construction is to apply the Kronecker product to a ``seed" ZCZ sequence set and a CR waveform set which satisfies the spectrum hole constraint, thus enabling the use of the Kronecker sequence property \cite{Stark} to achieve the ZCCZ property under arbitrary spectrum hole constraint. The resultant sequences feature a two-dimensional structure in the time-frequency domains. Because of this, our proposed construction is also referred to as a ``time-frequency synthesis".

A second main contribution is a new numerical algorithm to suppress the PAPR and the out-of-phase periodic auto-correlations (in the ZCCZ) of the proposed CR sequences without destroying their ZCCZ property. Due to the property of the above-mentioned proposed construction, the task of this algorithm is equivalent to generating CR sequences with low PAPR and low out-of-phase aperiodic auto-correlations. Different from the CR sequences in \cite{He10} with some spectral leakage, our proposed algorithm takes a frequency-domain approach which leads to CR sequences with zero spectral leakage over the spectrum holes, in addition to their properties of low PAPR and low aperiodic auto-correlations. Also, unlike the algorithm in [\ref{Tsai11}, Section IV] which directly optimizes the aperiodic auto-correlations by solving a relaxed non-convex problem, our proposed algorithm makes use of the following sequence property: a sequence with low aperiodic auto-correlations also possesses low periodic auto-correlations. Simulations indicate that this algorithm features better suppression capabilities for both PAPR and aperiodic auto-correlations, compared with the algorithm in [\ref{Tsai11}, Section IV].

Thirdly, based on the proposed quasi-ZCZ CR sequences, a CR-CDMA scheme which consists of a maximum ratio combining Rake receiver is presented. We examine the resilience against multiuser interference (MUI) and spectrum sensing mismatch of the proposed CR-CDMA, and show that it is capable of achieving the single-user bit-error-rate (BER) performance in quasi-synchronous multipath fading channels. Note that in an OFDM system, every subcarrier modulates a distinct data stream, leading to high PAPR. This is different from our proposed CR-CDMA system where every data symbol is spread over all the available subcarrier channels. In this sense, our proposed CR-CDMA system may be classified as multicarrier CDMA with dynamic spectral nulling. Different from the PAPR reduction approaches used for OFDM \cite{Jiang}, our proposed CR-CDMA can achieve low PAPR by sequence optimization (as shown in Section III.B).

This paper is organized as follows. Section II gives some preliminaries and presents the sequence design problem in CR-CDMA systems. In Section III, we first introduce the analytical construction of the quasi-ZCZ CR sequences (each of which satisfies the spectrum hole constraint), then present the numerical algorithm for CR sequences with low PAPR and low aperiodic auto-correlation functions. In Section IV, the proposed CR-CDMA using the proposed quasi-ZCZ CR sequences is presented, followed by the numerical simulations in Section V. In this end, this paper is summarized in Section VI.

\section{Preliminaries and Problem Definition}
\subsection{Notations and Definitions}
The following notations will be used throughout this paper.

\begin{itemize}
\item [$-$] $\mathbf{X}^{\text{T}}$ and $\mathbf{X}^{\text{H}}$ denote the transpose and the Hermitian transpose of matrix $\mathbf{X}$, respectively;
\item [$-$] $\|\mathbf{A}\|$ denotes the Frobenius norm of matrix $\mathbf{A}$;
\item [$-$] $<\mathbf{a},\mathbf{b}>$ denotes the inner-product sum between two complex-valued sequences $\mathbf{a}$ and $\mathbf{b}$;
\item [$-$] $T^\tau({\mathbf{x}})$ denotes the left-cyclic-shift of $\mathbf{x}=[x_0,x_1,\cdots,x_{L-1}]^{\text{T}}$ for $\tau$ positions, i.e.,
\begin{displaymath}
T^\tau({\mathbf{x}})=[x_\tau,x_{\tau+1},\cdots,x_{L-1},x_0,x_1,\cdots,x_{\tau-1}]^{\text{T}};
\end{displaymath}
\item [$-$] ${Diag}[\mathbf{x}]$ denotes a diagonal matrix with the diagonal vector of $\mathbf{x}$ and all off-diagonal entries equal to 0;
\item [$-$] $\mathbf{1}_{m}$ and $\mathbf{0}_{n}$ denote a length-$m$ all-1 row-vector and a length-$n$ all-0 row-vector, respectively;
\item [$-$] $\mathcal{I}_N$ denotes the identity matrix of order $N$;
\item [$-$] Denote $\omega_N=\exp\left (\frac{\sqrt{-1}2\pi}{N} \right )$.
\end{itemize}

For two length-$L$ complex-valued sequences\footnote{For ease of presentation, all sequences in this paper are in the form of column vectors.}
\begin{displaymath}
\mathbf{a}=[a_0,a_1,\cdots,a_{L-1}]^{\text{T}},~~\mathbf{b}=[b_0,b_1,\cdots,b_{L-1}]^{\text{T}},
\end{displaymath}
let $C_{{\mathbf{a}},{\mathbf{b}}}\left(\tau\right)$ be the aperiodic cross-correlation function (ACCF) between $\mathbf{a}$ and $\mathbf{b}$, i.e.,
\begin{equation}
\begin{split}
C_{{\mathbf{a}},{\mathbf{b}}}\left(\tau\right) &= {\sum\limits_{n = 0}^{L - \tau - 1} {a_n {b^*_{n + \tau}} }},
\end{split}
\end{equation}
where $0 \leq \tau \leq L-1$. Also, let $R_{{\mathbf{a}},{\mathbf{b}}}\left(\tau\right)$ be the periodic cross-correlation function (PCCF) between $\mathbf{a}$ and $\mathbf{b}$, i.e.,
\begin{equation}\label{equ_periodicx}
R_{{\mathbf{a}},{\mathbf{b}}}\left(\tau\right) = \sum\limits_{n = 0}^{L - 1} {a_n} {b^*_{n + \tau} }=\left <\mathbf{a},T^\tau(\mathbf{b}) \right >,
\end{equation}
where the addition $n+\tau$ in (\ref{equ_periodicx}) is performed modulo $L$. Clearly, $R_{{\mathbf{a}},{\mathbf{b}}}\left(\tau\right)=R_{{\mathbf{a}},{\mathbf{b}}}^*\left(-\tau\right)$. In particular, when {$\mathbf{a}=\mathbf{b}$}, $C_{{\mathbf{a}},{\mathbf{a}}}(\tau)$ will be sometimes written as $C_{{\mathbf{a}}}(\tau)$ and called the aperiodic auto-correlation function (AACF). Similarly,  $R_{{\mathbf{a}},{\mathbf{a}}} (\tau)$ will be sometimes written as $R_{{\mathbf{a}}}(\tau)$ and called the periodic auto-correlation function (PACF).

Let $\mathcal{A}=\left\{\mathbf{a}_1,\mathbf{a}_2,\cdots,\mathbf{a}_{K}\right\}$ be a set of $K$ sequences, each of length $L$, i.e.,
\begin{eqnarray}
{{\bf{a}}_i} & = & \left[  {a^i_0, a^i_1, \cdots, a^i_n, \cdots, a^i_{L - 1}} \right ]^{\text{T}},~~1\leq i\leq K.
\end{eqnarray}
\begin{definition}
$\mathcal{A}$ is said to be a $(K,L,Z)$ zero-correlation zone (ZCZ) sequence set if and only if it satisfies the following two conditions:
\begin{enumerate}
\item $R_{{\mathbf{a}}_i}\left(\tau\right)=0$ holds for any $1\leq i\leq K$ and $1\leq |\tau|<Z$;
\item $R_{{\mathbf{a}}_i,{\mathbf{a}}_j}\left(\tau\right)=0$ holds for any $i\neq j$ and $0\leq |\tau|<Z$.
 \end{enumerate}
In addition, $\mathcal{A}$ is said to be a $(K,L,Z)$ ``quasi-ZCZ" sequence set with zero cross-correlation zone (ZCCZ) if and only if the second condition is satisfied.
\end{definition}

%\newpage
%\vspace{0.1in}

Consider the following two length-$N$ sequences
\begin{displaymath}
\mathbf{x}=\left[x_0, x_1, \cdots, x_{N-1} \right]^{\text{T}},~~\mathbf{y}=\left[ y_0, y_1, \cdots, y_{N-1} \right ]^{\text{T}},
\end{displaymath}
 and another two length-$L$ sequences below
 \begin{displaymath}
 \mathbf{d}=\left [d_0, d_1, \cdots, d_{L-1} \right]^{\text{T}},~~\mathbf{e}=\left[e_0, e_1, \cdots, e_{L-1} \right]^{\text{T}}.
 \end{displaymath}

\vspace{0.1in}

\begin{definition}
(Kronecker Sequence)
\begin{displaymath}
\mathbf{u}=[u_0,u_1,\cdots,u_{LN-1}]^{\text{T}}
\end{displaymath}
 and
\begin{displaymath}
 \mathbf{v}=[v_0,v_1,\cdots,v_{LN-1}]^{\text{T}}
\end{displaymath}
  are called two Kronecker sequences if
\begin{equation}
u_k  =  d_l \cdot x_n,~~v_k  =  e_l \cdot y_n,
\end{equation}
where $k=lN+n, 0 \leq l \leq L-1, 0 \leq n \leq N-1$. That is, $\mathbf{u} = \mathbf{d}  \otimes \mathbf{x}$ and $\mathbf{v} = \mathbf{e}  \otimes \mathbf{y}$, where $\otimes$ denotes the Kronecker product operation.
\end{definition}

\vspace{0.1in}

\begin{remark}
By \cite{Stark}, the PCCF $R_{\mathbf{u},\mathbf{v}}\left(\tau\right)$ between $\mathbf{u}$ and $\mathbf{v}$ can be written as
\begin{equation}\label{CCKRON}
R_{\mathbf{u},\mathbf{v}}\left(\tau\right) = R_{\mathbf{d},\mathbf{e}}\left(l\right) C_{\mathbf{x},\mathbf{y}}\left(n\right) + R_{\mathbf{d},\mathbf{e}}\left(l+1\right) C_{\mathbf{x},\mathbf{y}}\left(n-N\right),
\end{equation}
where $\tau=lN+n, 0 \leq l \leq L-1, 0 \leq n \leq N-1$.
\end{remark}

\subsection{Sequence Design Problem in CR-CDMA Systems}
 In a CR system, a spectrum opportunity is defined as the spectral bands which are not being used by the designated primary users at a particular time in a particular geographic area. In this paper, we assume that the entire spectrum is divided into $N$ subcarriers. We further assume a ``subcarrier marking vector" ${\mathbf{{S}}} = \left[ {S_0, S_1, \cdots, S_{N - 1}} \right]^\text{T}$ which gives the status of all subcarriers. For example, the value of $S_k$ is set to 1 if the $k$th subcarrier is available; otherwise, $S_k=0$. Denote by $\Omega$ the set of all unavailable subcarrier positions, i.e., $\Omega= \left\{ {k ~ |~S_k = 0} \right\}$. In this paper, $\Omega$ is also referred to as a ``spectrum hole constraint". A spectrum hole constraint is called non-trivial if $|\Omega|>0$. In this paper, a time-domain sequence which satisfies a spectrum hole constraint is also called a CR sequence.

Let $\left\{{\mathbf{B}}_i \right\}_{i=1}^K$ be a set of $K$ length-$N$ \textit{frequency-domain} sequences, each satisfying the spectrum hole constraint, i.e.,
\begin{eqnarray}
\label{PSET}
{{\bf{B}}_i} & = & \left[ {B^i_0, B^i_1, \cdots, B^i_k, \cdots, B^i_{N - 1}} \right]^\text{T},~~1 \leq i \leq K,
\end{eqnarray}
where $B^i_k=0$ if $k\in \Omega$. Denote by $\mathcal{F}_N=[f_{i,j}]_{i,j=0}^{N-1}$ the (scaled) discrete Fourier transform (DFT) matrix of order $N$, i.e.,
\begin{equation}
%f_{i,j}=\frac{1}{\sqrt{N}} \exp\left(-\sqrt{-1}2\pi ij/N \right ),~~0\leq i,j\leq N-1.
f_{i,j}=\frac{1}{\sqrt{N}} \omega^{-ij}_N,~~0\leq i,j\leq N-1.
\end{equation}
Note that $\mathcal{F}_N$ is a unitary matrix, i.e., $\mathcal{F}_N \mathcal{F}^{\text{H}}_N=\mathcal{I}_N$.
Thus, the (scaled) inverse discrete Fourier transform (IDFT) matrix of order $N$ is $\mathcal{F}^{\text{H}}_N$.

Given $\left\{{\mathbf{B}}_i \right\}_{i=1}^K$, a \textit{time-domain} sequence set $\left\{\mathbf{{b}}_i \right\}_{i=1}^K$ can be obtained by
\begin{displaymath}
\mathbf{{b}}_i=[b^i_{0},b^i_1,\cdots,b^i_n,\cdots,b^i_{N-1}]^{\text{T}}=\mathcal{F}^{\text{H}}_N\mathbf{B}_i.
\end{displaymath}

\begin{definition}
[PAPR of \textit{time-domain} sequence]

The PAPR of a \textit{time-domain} sequence $\mathbf{b}=[b_0,b_1,\cdots,b_n,\cdots,b_{N-1}]^{\text{T}}$ is defined as
\begin{equation}
\text{PAPR}(\mathbf{b})=\frac{\max\limits_{0\leq n \leq N-1}|b_n|^2}{(1/N)\sum\limits_{n=0}^{N-1}|b_n|^2}.
\end{equation}
By this definition, the lowest PAPR of $\mathbf{b}$ is equal to 1 (i.e., 0dB) if and only if all sequence elements have identical magnitude.
\end{definition}

\vspace{0.1in}

\begin{remark}
The sequence design problem in CR-CDMA systems is to design a set of \textit{time-domain} sequences $\left\{\mathbf{{b}}_i \right\}_{i=1}^K$ which satisfy the following conditions:
\begin{enumerate}
\item subject to a varying spectrum hole constraint (i.e., $\Omega$). Ideally, with zero spectral leakage over the spectrum holes;
\item with good auto- and cross- correlation property (e.g., ZCZ);
\item with low PAPR values. Ideally, PAPR=0dB.
\end{enumerate}
\end{remark}

\vspace{0.1in}

 Note that given a traditional ZCZ sequence set, whenever a non-trivial spectrum hole constraint is imposed, the ZCZ correlation property however cannot be guaranteed. We show this by the following example.

\vspace{0.1in}

\begin{example}
Let ${\mathbf{{S}}} = \left[ \mathbf{1}_{4},\mathbf{0}_{2},\mathbf{1}_{3},\mathbf{0}_{4},\mathbf{1}_{3}  \right]^{\text{T}}$
be the subcarrier marking vector. Therefore, the spectrum hole constraint is $\Omega=\{4,5\} \cup \{9,10,11,12\}$. Consider a $(K,L,Z)= (2,16,3)$ binary ZCZ sequence set (i.e., PAPR=0dB for each sequence) below.
\begin{displaymath}
\begin{split}%{cl}
{\mathbf{z}_1}& = \left[1, 1, 1, 1, 1, -1, 1, -1, 1, 1, -1, -1, 1, -1, -1, 1 \right]^{\text{T}},\\
{\mathbf{z}_2}& = \left[1, -1, 1, -1, 1, 1, 1, 1, 1, -1, -1, 1, 1, 1, -1, -1\right]^{\text{T}}.
\end{split}
\end{displaymath}
To impose the spectrum hole constraint, the 16-point DFT is first applied to transform every ZCZ (\textit{time-domain}) sequence into a \textit{frequency-domain} sequence, followed by spectral nulling and the 16-point IDFT. As a result, the above ZCZ sequence set can be transformed to a new \textit{time-domain} sequence set $\{\mathbf{w}_1,\mathbf{w}_2\}$ which satisfies the spectrum hole constraint, i.e.,
\begin{displaymath}
\mathbf{w}_1= \mathcal{F}^{\text{H}}_{16} \cdot {Diag}[\mathbf{S}] \cdot \mathcal{F}_{16} \cdot \mathbf{z}_1,~~\mathbf{w}_2= \mathcal{F}^{\text{H}}_{16} \cdot {Diag}[\mathbf{S}] \cdot \mathcal{F}_{16} \cdot \mathbf{z}_2.
\end{displaymath}
The time-domain magnitudes, PACF and PCCF between $\{\mathbf{z}_1,\mathbf{z}_2\}$ and $\{\mathbf{w}_1,\mathbf{w}_2\}$ are shown in Fig. 1. For ease of presentation, these correlation values have been normalized.
\end{example}

\begin{figure}
\centering
  \includegraphics[width=3.2in]{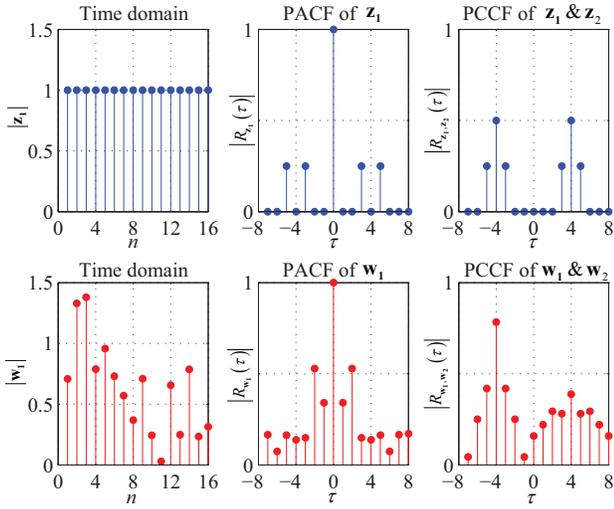}
  \caption{Time-domain magnitudes, PACF and PCCF between $\{\mathbf{z}_1,\mathbf{z}_2\}$ and $\{\mathbf{w}_1,\mathbf{w}_2\}$ }
  \label{FIG:LowZCZ0}
\end{figure}

\vspace{0.1in}

One can see that after spectral nulling, the correlation property of $\{\mathbf{w}_1,\mathbf{w}_2\}$ in \textit{Example 1} becomes unacceptable for supporting interference-free CR-CDMA communications due to the non-zero (non-trivial) PACFs and PCCFs within the ZCZ. In addition, the PAPR of $\mathbf{w}_1$ is increased to 5.53dB. This shows that new techniques are needed for the design of ZCZ CR sequences which satisfy the requirements given in \textit{Remark 2}.

\section{Construction of Quasi-ZCZ CR Sequence Sets}
In this section, we first present an analytical construction of quasi-ZCZ CR sequences which have ZCCZ for every distinct sequence pair, regardless of any spectrum hole constraint. We then introduce a new numerical algorithm to further optimize the PAPR and PACF of the proposed quasi-ZCZ CR sequences.

\vspace{0.1in}

\subsection{Proposed Quasi-ZCZ CR Sequences from the Time-Frequency Synthesis}
Let $\mathcal{A}=\left\{\mathbf{{a}}_i \right\}_{i=1}^K$  be a $\left(K,L,Z \right)$ ZCZ sequence set, where
\begin{eqnarray}
{{\bf{a}}_i} & = & \left[ {a^i_0, a^i_1, \cdots, a^i_l, \cdots, a^i_{L - 1}} \right]^{\text{T}},~~1\leq i \leq K.
\end{eqnarray}
Also, let $\mathcal{B}=\{\mathbf{b}_i\}_{i=1}^{K}$ be a set of length-$N$ \textit{time-domain} sequences obtained from the IDFT of $\left\{{\mathbf{B}}_i \right\}_{i=1}^K$ (a set of {\textit{frequency-domain} sequences} subject to a spectrum hole constraint $\Omega$), i.e., $\mathbf{b}_i=\mathcal{F}^{\text{H}}_N\mathbf{B}_i$. Construct a sequence set $\mathcal{C}=\left\{\mathbf{c}_i \right\}_{i=1}^K$ (each sequence of length $NL$) as follows,
\begin{equation}\label{equ_propose_cons}
\mathbf{c}_i=\mathbf{a}_i \otimes \mathbf{b}_i=\mathbf{a}_i \otimes \mathcal{F}^{\text{H}}_N \mathbf{B}_i,~~1\leq i\leq K.
\end{equation}

\begin{theorem}
$\mathcal{C}$ is a $(K,NL,NZ-N)$ ``quasi-ZCZ" CR sequence set, i.e., for any $i\neq j$,
 \begin{equation}
 \label{NewCC}
R_{{{\bf{c}}_i},{{\bf{c}}_j}}\left(\tau\right) = 0,~~|\tau| < NZ-N.
\end{equation}
In addition, each sequence in $\mathcal{C}$ satisfies the spectrum hole constraint $\Omega$ and with
\begin{equation}
\label{NewAC}
R_{\mathbf{c}_i}\left(\tau\right) = \left\{ {\begin{array}{*{20}{c}}
L \cdot C_{\mathbf{b}_i}\left( \tau  \right), &|\tau| < N;\\
0, &N \leq |\tau| < NZ-N.
\end{array}} \right.
\end{equation}
\end{theorem}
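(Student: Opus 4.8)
The plan is to obtain both parts directly from the Kronecker correlation identity~(\ref{CCKRON}) of \textit{Remark 1}. Applying it with $\mathbf{u}=\mathbf{c}_i=\mathbf{a}_i\otimes\mathbf{b}_i$ and $\mathbf{v}=\mathbf{c}_j=\mathbf{a}_j\otimes\mathbf{b}_j$ (so that $\mathbf{d}=\mathbf{a}_i$, $\mathbf{e}=\mathbf{a}_j$, $\mathbf{x}=\mathbf{b}_i$, $\mathbf{y}=\mathbf{b}_j$) gives, for $\tau=lN+n$ with $0\le n\le N-1$ and $0\le l\le L-1$,
\begin{displaymath}
R_{\mathbf{c}_i,\mathbf{c}_j}(\tau)=R_{\mathbf{a}_i,\mathbf{a}_j}(l)\,C_{\mathbf{b}_i,\mathbf{b}_j}(n)+R_{\mathbf{a}_i,\mathbf{a}_j}(l+1)\,C_{\mathbf{b}_i,\mathbf{b}_j}(n-N).
\end{displaymath}
By the conjugate symmetry $R_{\mathbf{c}_i,\mathbf{c}_j}(-\tau)=R^*_{\mathbf{c}_j,\mathbf{c}_i}(\tau)$ it suffices to treat $0\le\tau<NZ-N$. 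The single observation that drives the proof is that the bound $\tau<NZ-N=N(Z-1)$ forces $0\le l\le Z-2$; consequently both indices $l$ and $l+1$ appearing in the identity lie in $\{0,1,\dots,Z-1\}$, hence are strictly less than $Z$ (and, since $Z\le L$, also strictly less than $L$, so no cyclic wrap-around of $l+1$ occurs).

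\textit{Cross-correlation (\ref{NewCC}).} Let $i\ne j$. Because $0\le l<Z$ and $0\le l+1<Z$, condition~2 of \textit{Definition 1} for the seed ZCZ set $\mathcal{A}$ gives $R_{\mathbf{a}_i,\mathbf{a}_j}(l)=R_{\mathbf{a}_i,\mathbf{a}_j}(l+1)=0$, so the displayed identity yields $R_{\mathbf{c}_i,\mathbf{c}_j}(\tau)=0$ for $0\le\tau<NZ-N$; the negative range follows from the conjugate symmetry above with $i$ and $j$ interchanged.

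\textit{Auto-correlation (\ref{NewAC}).} Put $i=j$, so the identity reads $R_{\mathbf{c}_i}(\tau)=R_{\mathbf{a}_i}(l)\,C_{\mathbf{b}_i}(n)+R_{\mathbf{a}_i}(l+1)\,C_{\mathbf{b}_i}(n-N)$. If $0\le\tau<N$ then $l=0$ and $n=\tau$; here $R_{\mathbf{a}_i}(0)=\|\mathbf{a}_i\|^2=L$ (the seed ZCZ sequences being unimodular, the standard normalization also used in \textit{Example 1}) and $R_{\mathbf{a}_i}(1)=0$ by condition~1 of \textit{Definition 1} since $Z\ge 2$, whence $R_{\mathbf{c}_i}(\tau)=L\,C_{\mathbf{b}_i}(\tau)$. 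If $N\le\tau<NZ-N$ then $1\le l\le Z-2$, so condition~1 of \textit{Definition 1} gives $R_{\mathbf{a}_i}(l)=R_{\mathbf{a}_i}(l+1)=0$ and hence $R_{\mathbf{c}_i}(\tau)=0$. The remaining sub-range $-N<\tau<0$ follows from $R_{\mathbf{c}_i}(\tau)=R^*_{\mathbf{c}_i}(-\tau)$ together with $C_{\mathbf{b}_i}(\tau)=C^*_{\mathbf{b}_i}(-\tau)$.

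\textit{Spectrum-hole constraint.} Since $\mathbf{c}_i=\mathbf{a}_i\otimes\mathbf{b}_i$, it is the concatenation of $L$ length-$N$ blocks whose $l$th block equals $a^i_l\mathbf{b}_i=\mathcal{F}^{\text{H}}_N(a^i_l\mathbf{B}_i)$; as $\mathbf{B}_i$ vanishes on $\Omega$, so does $a^i_l\mathbf{B}_i$, i.e.\ the $N$-point spectrum of every length-$N$ block of $\mathbf{c}_i$ is supported off $\Omega$, which is precisely what it means for the CR-CDMA waveform $\mathbf{c}_i$ to satisfy the spectrum-hole constraint. I expect the only delicate step to be the index bookkeeping of the first paragraph: one must check that the chosen ZCCZ width $NZ-N$ (rather than, say, $NZ$) is exactly the bound that keeps \emph{both} terms of~(\ref{CCKRON}) inside the ZCZ window of $\mathcal{A}$; after that the theorem reduces to substitution, modulo the harmless normalization $R_{\mathbf{a}_i}(0)=L$ of the seed set.
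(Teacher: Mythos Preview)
Your proof is correct and follows essentially the same route as the paper: both apply the Kronecker correlation identity~(\ref{CCKRON}) and then invoke the ZCZ properties of the seed set $\mathcal{A}$ on the indices $l$ and $l+1$. Your bookkeeping is in fact a bit more careful than the paper's (handling negative shifts explicitly via conjugate symmetry, and flagging the unimodular normalization $R_{\mathbf{a}_i}(0)=L$), but the underlying argument is identical.
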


\begin{proof}
By \eqref{CCKRON}, the PCCF between $\mathbf{c}_i$ and $\mathbf{c}_j$ can be written as
\begin{equation}\label{ACdesign}
\begin{split}
R_{{{\bf{c}}_i},{{\bf{c}}_j}}\left(\tau\right) & = R_{{{\bf{a}}_i},{{\bf{a}}_j}}\left(l \right) C_{\mathbf{b}_i,\mathbf{b}_j}\left( n \right) \\
& ~~~~~+ R_{{{\bf{a}}_i},{{\bf{a}}_j}}\left(l+1 \right) C_{\mathbf{b}_i,\mathbf{b}_j}\left( n-N\right),
\end{split}
\end{equation}
where $l=\lfloor \tau/N\rfloor,n=(\tau~\text{mod}~N)$. For $|\tau| < NZ-N$, we have $|l|<|l+1|\leq Z-1$ and therefore, by the zero cross-correlation zone property of $\left\{\mathbf{{a}}_i \right\}_{i=1}^K$,
\begin{equation}
R_{{{\bf{a}}_i},{{\bf{a}}_j}}\left(l \right)=R_{{{\bf{a}}_i},{{\bf{a}}_j}}\left(l+1 \right)=0,~~i\neq j.
\end{equation}
Hence, for $i\neq j$, we have $R_{{{\bf{c}}_i},{{\bf{c}}_j}}\left(\tau\right)=0$ for $|\tau| < NZ-N$. It follows that $\mathcal{C}$ is a sequence set of ZCCZ width of $NZ-N$.

On the other hand, if $i=j$, (\ref{ACdesign}) is reduced to
\begin{equation}
R_{{{\bf{c}}_i}}\left(\tau\right) = R_{{{\bf{a}}_i}}\left(l \right) C_{\mathbf{b}_i}\left( n \right) + R_{{{\bf{a}}_i}}\left(l+1 \right) C_{\mathbf{b}_i}\left( n-N\right).
\end{equation}
By the zero auto-correlation zone property of $\left\{\mathbf{{a}}_i \right\}_{i=1}^M$, one can readily show that (\ref{NewAC}) is true. In the end, since each sequence can be written as
\begin{equation}\label{equ_ci}
\mathbf{c}^{\text{T}}_i=\Bigl [a^i_0 (\mathcal{F}^{\text{H}}_N\mathbf{B}_i)^{\text{T}},a^i_1 (\mathcal{F}^{\text{H}}_N\mathbf{B}_i)^{\text{T}},\cdots,a^i_{L-1} (\mathcal{F}^{\text{H}}_N\mathbf{B}_i)^{\text{T}} \Bigl ],
\end{equation}
where $\mathbf{B}_i$ is the $i$th {\textit{frequency-domain} sequence} subject to $\Omega$. It follows that each sequence in $\mathcal{C}$ also satisfies the spectrum hole constraint $\Omega$, thus completing the proof.
\end{proof}

\vspace{0.1in}

Note that the idea of the proposed construction is to assign every user a specific ZCZ sequence which is spread over all available subcarrier channels, followed by the inverse Fourier transform specified by the corresponding frequency-domain sequence $\mathbf{B}_i$. Removing away the IDFT matrix $\mathcal{F}_N$ in (\ref{equ_ci}), every sequence (say, $\mathbf{c}_i$) can be decomposed in time- and frequency- domains as shown in Fig.~\ref{FIG:FDNulling}. For instance, $a^i_l B^i_m$ is assigned as the sequence element at the $l$th time-slot and the $m$th subcarrier channel, thus forming a time-frequency lattice of $\mathbf{c}_i$. Because of this, the proposed construction is sometimes also referred to as a ``time-frequency synthesis". To show the spectrum hole constraint specified by $\Omega$, these columns corresponding to the unavailable spectral bands (occupied by the primary users) are nulled (denoted as ``0" in Fig.~\ref{FIG:FDNulling}).

\begin{figure}
\centering
  \includegraphics[width=3.2in]{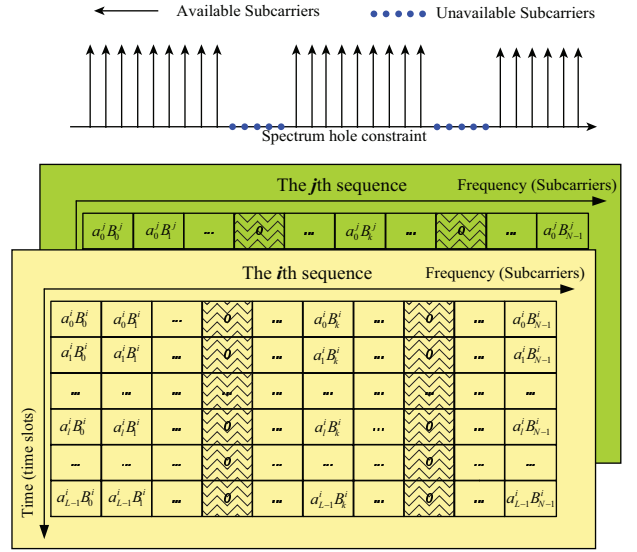}
  \caption{Time-frequency lattice of $\mathbf{c}_i$ and $\mathbf{c}_j$.}
  \label{FIG:FDNulling}
\end{figure}

Furthermore, we have the following remarks.
\begin{enumerate}
\item Different from the CR sequences (i.e., $\{\mathbf{w}_1,\mathbf{w}_2\}$) in \textit{Example 1} (where every CR sequence consists of one \textit{frequency-domain} sequence only), each of our proposed quasi-ZCZ CR sequences is a concatenation of multiple \textit{frequency-domain} sequences.
  \item The strength of our proposed construction is that the resultant sequence set has ZCCZ (which enables interference-free CR-CDMA communications), regardless of any spectrum hole constraint (thanks to the Kronecker sequence property). Therefore, the barrier of the spectrum hole constraint, which prevents conventional sequences with good correlations from finding applications in CR-CDMA systems, is fixed.
  \item To support more CR-CDMA users, it is desirable to employ a bound-achieving ``seed" ZCZ sequence set, i.e., $\left\{\mathbf{{a}}_i \right\}_{i=1}^K$ with sequence parameters of $(K,L,Z)$, for the construction of the proposed $(K,NL,NZ-N)$ quasi-ZCZ CR sequence set. For a polyphase ZCZ sequence set, it is known that $K\leq \lfloor L/Z \rfloor $ \cite{Tang00}. A bound-achieving polyphase ZCZ sequence set constructed from the generalized Chirp-like sequences was presented in \cite{Popovic10}. However, for a binary ZCZ sequence set, it is widely believed that $K\leq \left \lfloor \frac{L}{2(Z-1)} \right \rfloor$ where $Z\geq 3$ \cite{Tang10}. We will show that the ``seed" binary ZCZ sequence set in \textit{Example 2} satisfies the latter set size upper bound.
\end{enumerate}

\vspace{0.1in}

\begin{example}
Let ${\mathbf{{S}}} = \left[ \mathbf{1}_{14},\mathbf{0}_{6},\mathbf{1}_{20},\mathbf{0}_{8},\mathbf{1}_{16}  \right]^{\text{T}}$ be the subcarrier marking vector. Thus,
\begin{equation}
\Omega = \{14,15,\cdots,19\}\cup \{40,41,\cdots,47\} .
\end{equation}
Define $\{\mathbf{B}_i\}_{i=1}^{K}$ with $K=4$ and sequence length $N=64$, where
\begin{equation}
\begin{split}
\mathbf{B}_i & =[B^i_0,B^i_1,\cdots,B^i_k,\cdots,B^i_{N-1}]^{\text{T}}\\
B^i_k &  =
\left \{
\begin{array}{ll}
\exp\left (\frac{-\sqrt{-1} \pi u_i k^2}{N}\right ),& k \in \{0,1,\cdots,63\} \setminus \Omega,\\
0,& k\in \Omega
\end{array}
\right .
\end{split}
\end{equation}
and $(u_1,u_2,u_3,u_4)=(3,5,7,9)$. Note that if the spectrum hole constraint is removed, each $\mathbf{B}_i$ will become a Zadoff-Chu sequence\footnote{Zadoff-Chu sequences are used here for easy repetition of the proposed quasi-ZCZ CR sequences. In practice, $\{\mathbf{B}_i\}_{i=1}^{K}$ can be random sequences which satisfy the spectrum hole constraint.}. The time- and frequency- domain magnitudes, and the AACF of $\mathbf{b}_1=\mathcal{F}^{\text{H}}_{64} \mathbf{B}_1$ are shown in Fig. 3-a. It turns out that the PAPR (i.e., equals to 4.1dB) and the maximum out-of-phase AACF (i.e., equals to 0.2131) may be unacceptable for practical applications. As such, we will present an algorithm in Section III.B to further optimize the PAPR and the AACF.

Consider a $(K,L,Z)= (4,16,3)$ binary ZCZ sequence set below
\begin{displaymath}
\begin{split}%{cl}
{{\bf{a}}_1}& = \left[1, 1, -1, 1, 1, 1, -1, 1, -1, -1, -1, 1, -1, -1, -1, 1 \right]^{\text{T}},\\
{{\bf{a}}_2}& = \left[-1, -1, 1, -1, 1, 1, -1, 1, 1, 1, 1, -1, -1, -1, -1, 1\right]^{\text{T}}, \\
{{\bf{a}}_3}& = \left[1, -1, -1, -1, 1, -1, -1, -1, -1, 1, -1, -1, -1, 1, \right. \\
            &    ~~~~~~~~~~~~~~~~~~~~~~~~~~~~~~~~~~~~~~~~~~~~~~~~~~~~~~~ \left. -1, -1\right]^{\text{T}}, \\
{{\bf{a}}_4}& = \left[-1, 1, 1, 1, 1, -1, -1, -1, 1, -1, 1, 1, -1, 1, -1, -1\right]^{\text{T}}.
\end{split}
\end{displaymath}
 Applying our proposed construction in (\ref{equ_propose_cons}), a $(K,NL,NZ-N)=(4,1024,128)$ ``quasi-ZCZ" CR sequence set $\left\{\mathbf{{c}}_i \right\}_{i=1}^K$ is obtained. To see this, the PACF of $\mathbf{c}_4$ and the PCCF between $\mathbf{c}_1$ and $\mathbf{c}_4$ are shown in Fig. 3-b. One can see that $\mathbf{c}_1$ and $\mathbf{c}_4$ have ZCCZ of width 128, thus \eqref{NewCC} is verified. Similarly, one can verify \eqref{NewAC}.
\end{example}

\begin{figure*}%[!h]
\centerline{
\subfloat[Time- and frequency- domain magnitudes, and AACF of $\mathbf{b}_4$.]{\includegraphics[width=3.2in]{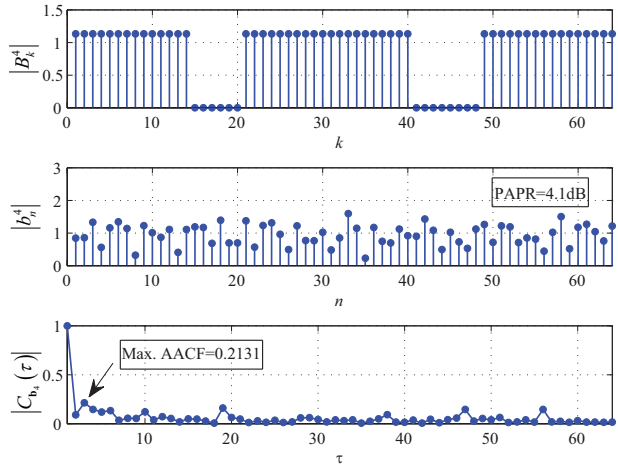}
\label{fig_examp2_a}}
\hfil
\subfloat[PACF of $\mathbf{c}_4$ and PCCF between $\mathbf{c}_1$ and $\mathbf{c}_4$.]{\includegraphics[width=3.2in]{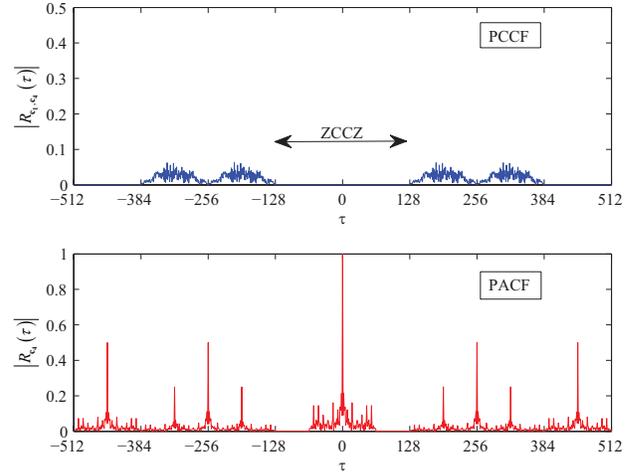}
\label{fig_examp2_b}}
}
\caption{Correlation plots for the quasi-ZCZ CR sequences constructed in \textit{Example 2}.}
\label{FIG:LowZCZ}
\end{figure*}

\subsection{Proposed Algorithm for CR Sequences with Low PAPR and Low AACF}
In this subsection, we present a numerical algorithm to further optimize the proposed quasi-ZCZ CR sequences in Section III.A. An interesting observation is that the correlation properties shown in \textit{Theorem 1} will also hold\footnote{Except for the out-of-zone auto- and cross- correlations for $|\tau|\geq NZ-N$.} if setting all sequences in $\mathcal{B}$ to be identical (say, all equal to $\mathbf{b}=[b_0,b_1,\cdots,b_n,\cdots,b_{N-1}]^{\text{T}}$). In this case, the PACF of all sequences for $|\tau|<N$ will be fully determined by the AACF of $\mathbf{b}$ (i.e., $C_{\mathbf{b}}(\tau)$), as shown in (\ref{NewAC}). Also, the CR sequence $\mathbf{b}_1$ (shown in Fig. 3-a) in \textit{Example 2} has a large PAPR value of 4.10dB which is not preferable for higher energy transmission efficiency. %\textcolor[rgb]{1.00,0.00,0.00}{Different from PAPR reduction methods in OFDM systems \cite{Jiang}, it is able to have substantial PAPR reduction by %careful design of our quasi-ZCZ sequences as our proposed CR-CDMA is a form of multicarrier CDMA.}
These observations motivate us to search for a single CR sequence with low PAPR and (non-trivial) low AACF over $|\tau|<N$.

For simplicity, we suppose $\|\mathbf{b}\|^2_2=N$, i.e., $\sum_{n=0}^{N-1}|b_n|^2=N$. By Parseval's theorem, we have
\begin{equation}
\|\mathbf{B}\|^2_2=\mathbf{b}^{\text{H}}\mathcal{F}^{\text{H}}_N\mathcal{F}_N \mathbf{b}=N,
\end{equation}
where $\mathbf{B}=[B_0,B_1,\cdots,B_k,\cdots,B_{N-1}]^{\text{T}}$. To optimize the AACF of $\mathbf{b}$, we recall the key idea of the CAN (cyclic algorithm new) in \cite{Stoica09} proposed by Stoica \emph{et al}: the correlation sidelobes of $\mathbf{b}$ vanish if the $2N$-point DFT of $[\mathbf{b}^{\text{T}},\mathbf{0}_N]^{\text{T}}$ have identical magnitude of $1/\sqrt{2}$. Hence, a CR sequence $\mathbf{b}$ with low AACF sidelobes can be obtained by solving the following problem:
\begin{equation}
\begin{split}
 \min\limits_{\mathbf{B},\mathbf{P}} &~~\mathcal{J}_1(\mathbf{B},\mathbf{P})=\left \|\mathcal{F}_{2N} \left [ \begin{matrix} \mathcal{F}^{\text{H}}_{N}\mathbf{B}\\ \mathbf{0}^{\text{T}}_N \end{matrix}\right ]  - \mathbf{P}\right \|^2_2,\\
\text{s.t.}  & ~~(1): B_k=0~\text{if}~k\in \Omega;\\
             & ~~(2): |P_k|=\frac{1}{\sqrt{2}},~~k=0,1,\cdots,2N-1,
\end{split}
\end{equation}
where $\mathbf{P}=[P_0,P_1,\cdots,P_{2N-1}]^{\text{T}}$.

On the other hand, to optimize the PAPR of $\mathbf{b}$, we wish to solve the following problem:
\begin{equation}
\begin{split}
 \min\limits_{\mathbf{B},\mathbf{p}} &~~\mathcal{J}_2(\mathbf{B},\mathbf{p})=\left \| \mathcal{F}^{\text{H}}_{N}\mathbf{B}  - \mathbf{p}\right \|^2_2,\\
\text{s.t.}  & ~~(1): B_k=0~\text{if}~k\in \Omega;\\
             & ~~(2): |p_k|=1,~~k=0,1,\cdots,N-1,
\end{split}
\end{equation}
where $\mathbf{p}=[p_0,p_1,\cdots,p_{N-1}]^{\text{T}}$.

By introducing a penalty factor $\lambda\in [0,1]$ which controls the relative weighting of $\mathcal{J}_1$ and $\mathcal{J}_2$, our optimization problem can now be formulated as follows:
\begin{equation}\label{mini_PAPR_ACF}
\begin{split}
 \min\limits_{\mathbf{B},\mathbf{P},\mathbf{p}} &~~\mathcal{J}(\mathbf{B},\mathbf{P},\mathbf{p})=\lambda \mathcal{J}_1(\mathbf{B},\mathbf{P}) + (1-\lambda) \mathcal{J}_2(\mathbf{B},\mathbf{p}), \\
\text{s.t.}  & ~~(1): B_k=0~\text{if}~k\in \Omega;\\
             & ~~(2): |P_k|=\frac{1}{\sqrt{2}},~~k=0,1,\cdots,2N-1;\\
             & ~~(3): |p_k|=1,~~k=0,1,\cdots,N-1.
\end{split}
\end{equation}

\vspace{0.1in}

To solve the minimization problem in (\ref{mini_PAPR_ACF}), our proposed algorithm is given as follows.
 \begin{enumerate}
 \item First, find the power spectrum of $\mathbf{b}$, i.e., ${\text{\mbox{\boldmath{$\beta$}}}}=[|B_0|^2,|B_1|^2,\cdots,|B_{N-1}|^2]^{\text{T}}$. Note that a necessary condition for $\mathbf{b}$ to have uniformly low AACF is that it should also have uniformly low PACF. Therefore, we consider to generate a CR sequence (say, $\hat{\mathbf{b}}$) with uniformly low PACF and apply the power spectrum of $\hat{\mathbf{b}}$ as a suboptimal solution of ${\text{\mbox{\boldmath{$\beta$}}}}$. By utilizing {the Wiener-Khinchin Theorem} that the power spectrum and the PACF form a Fourier transform pair, as shown in \cite{Tsai11}, we have
     \begin{equation}\label{PwrSpecFFT}
     R_{\mathbf{b}}(\tau)=\sqrt{N}\mathcal{F}_N(\tau,:) {\text{\mbox{\boldmath{$\beta$}}}},~~\tau\in \{1,2,\cdots,N-1\},
     \end{equation}
     where $\mathcal{F}_N(\tau,:)$ denotes the $\tau$th row sequence of $\mathcal{F}_N$.
     %For completeness, we have shown the proof of (\ref{PwrSpecFFT}) in Appendix A.

\vspace{0.1in}

     Thus, our task can be transformed to solving the following mini-max problem:
     \begin{equation}\label{minimax_PACF}
     \begin{split}
 \min_{{\text{\mbox{\boldmath{$\beta$}}}}} & \max\limits_{1\leq \tau \leq N-1}  \left | \mathcal{F}_N(\tau,:) {\text{\mbox{\boldmath{$\beta$}}}} \right |\\
\text{s.t.} &  ~~(1): B_k=0~\text{if}~k\in \Omega;\\
            &  ~~(2): \sum \limits_{k=0}^{N-1}|B_k|^2=N.
\end{split}
\end{equation}
 \item Then, apply the Gerchberg-Saxton (GS) algorithm \cite{GS72} to find the optimal phases of $\mathbf{B},\mathbf{P},\mathbf{p}$, denoted by
 \begin{displaymath}
 \begin{split}
\psi_{\mathbf{B}} & =[\psi^{\mathbf{B}}_0,\psi^{\mathbf{B}}_1,\cdots,\psi^{\mathbf{B}}_{N-1}]^{\text{T}},\\
\psi_{\mathbf{P}} & =[\psi^{\mathbf{P}}_0,\psi^{\mathbf{P}}_1,\cdots,\psi^{\mathbf{P}}_{2N-1}]^{\text{T}},\\
\psi_{\mathbf{p}} & =[\psi^{\mathbf{p}}_0,\psi^{\mathbf{p}}_1,\cdots,\psi^{\mathbf{p}}_{N-1}]^{\text{T}}
\end{split}
 \end{displaymath}
  respectively. Note that the GS algorithm is guaranteed to converge and has been widely used in \cite{Tsai11}, \cite{Stoica09} and \cite{Stoica09-2}. %We sketch the detailed GS algorithm as follows.
  \begin{enumerate}
  \item Initialize the values of $\psi_{\mathbf{B}},\psi_{\mathbf{P}},\psi_{\mathbf{p}}$ which can be, say, some random numbers over $[0,2\pi)$. Also, initialize $\lambda$ depending on the priorities of PAPR and AACF, e.g., $\lambda$ is set to be a larger value over $[0,1]$ if AACF has a higher priority.
  \item Fix $\mathbf{B}$, choose
  \begin{equation}\label{num_alg_stp1}
  \psi_{\mathbf{P}}=\arg \left \{ \mathcal{F}_{2N} \left [ \begin{matrix} \mathcal{F}^{\text{H}}_{N}\mathbf{B}\\ \mathbf{0}^{\text{T}}_N \end{matrix}\right ]\right \}~\text{and}~\psi_{\mathbf{p}}=\arg \Bigl \{ \mathcal{F}^{\text{H}}_{N}\mathbf{B} \Bigl \}
  \end{equation}
  for minimum value of $\mathcal{J}$ in (\ref{mini_PAPR_ACF}).
  \item Fix $\mathbf{P}$ and $\mathbf{p}$, choose
    \begin{equation}\label{num_alg_stp2}
\psi_{\mathbf{B}}=\arg \Bigl \{ \lambda  \mathcal{F}_{N}\hat{\mathbf{p}} + (1-\lambda)\mathcal{F}_{N}\mathbf{p} \Bigl \},
  \end{equation}
  where $\hat{\mathbf{p}}$ denotes the first $N$ elements of $\mathcal{F}^{\text{H}}_{2N}\mathbf{P}$, for minimum value of $\mathcal{J}$ in (\ref{mini_PAPR_ACF}). {This is because finding the minimum of $\mathcal{J}$ in this case is equivalent to minimizing the following term.
  \begin{displaymath}
  \mathcal{J}^{(1)}=\lambda \left \| \left [ \begin{matrix} \mathcal{F}^{\text{H}}_{N}\mathbf{B}-\hat{\mathbf{p}} \\ \check{\mathbf{p}} \end{matrix}\right ]  \right \|^2_2 + (1-\lambda) \left \| \mathcal{F}^{\text{H}}_{N}\mathbf{B}  - \mathbf{p}\right \|^2_2,
  \end{displaymath}
  where $\check{\mathbf{p}}$ denotes the second $N$ elements of $\mathcal{F}^{\text{H}}_{2N}\mathbf{P}$. Since $\mathbf{P}$ is fixed (so is $\check{\mathbf{p}}$), finding the minimum of $\mathcal{J}^{(1)}$ is equivalent to minimizing
    \begin{displaymath}
  \mathcal{J}^{(2)}=\lambda \left \| \mathcal{F}^{\text{H}}_{N}\mathbf{B}-\hat{\mathbf{p}}  \right \|^2_2 + (1-\lambda) \left \| \mathcal{F}^{\text{H}}_{N}\mathbf{B}  - \mathbf{p}\right \|^2_2.
  \end{displaymath}
  $\psi_{\mathbf{B}}$ in (\ref{num_alg_stp2}) follows by solving $\min_{\mathbf{B}}{ \mathcal{J}^{(2)}}$.
  }

  \end{enumerate}
  Repeat steps b) and c) iteratively until a pre-set termination condition is met, e.g.,
  \begin{displaymath}
  \left \| \mathbf{B}^{(i)}-\mathbf{B}^{(i-1)}\right \|<\varepsilon=10^{-5},
  \end{displaymath}
  where $\mathbf{B}^{(i)}$ denotes the frequency-domain sequence $\mathbf{B}$ obtained after the $i$th iteration.
 \end{enumerate}

The flow chart of our proposed algorithm is given in Fig. 4.
\begin{figure}
\centering
  \includegraphics[width=2in]{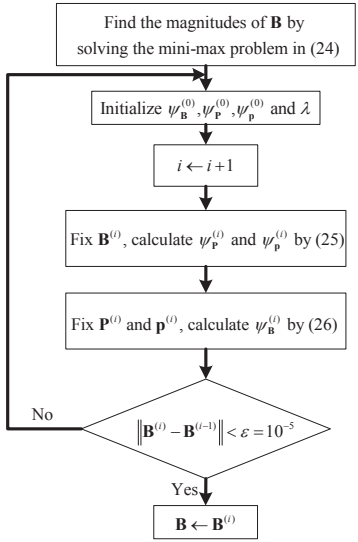}
  \caption{Flow chart of our proposed algorithm for CR sequences with low PAPR and low AACF.}
  \label{fig_flowchart}
\end{figure}

\begin{figure*}%[!h]
\centerline{
\subfloat[$\lambda=0.15$]{\includegraphics[width=3.4in]{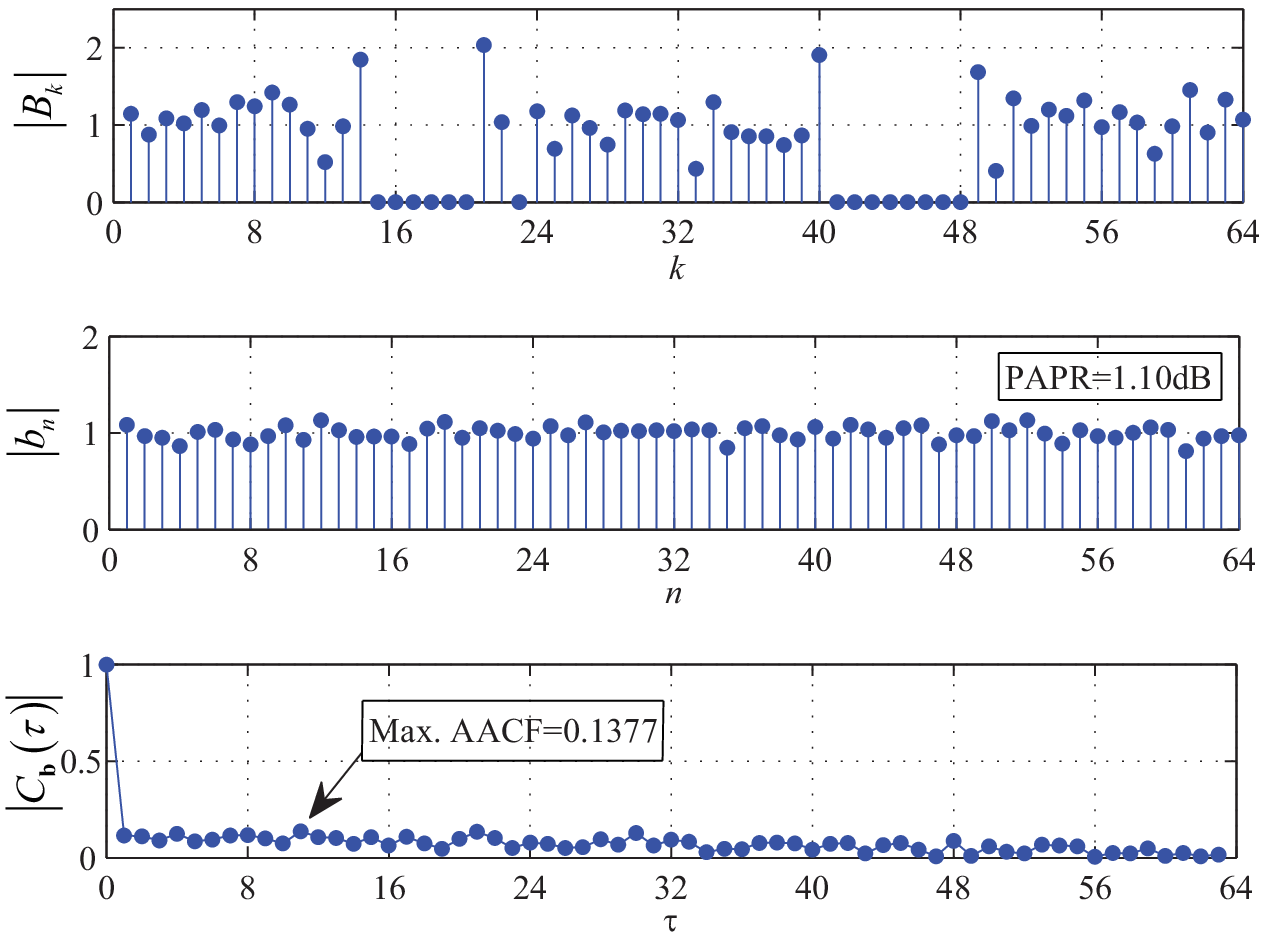}
\label{fig_examp3_a}}
\hfil
\subfloat[$\lambda=0.95$]{\includegraphics[width
=3.4in]{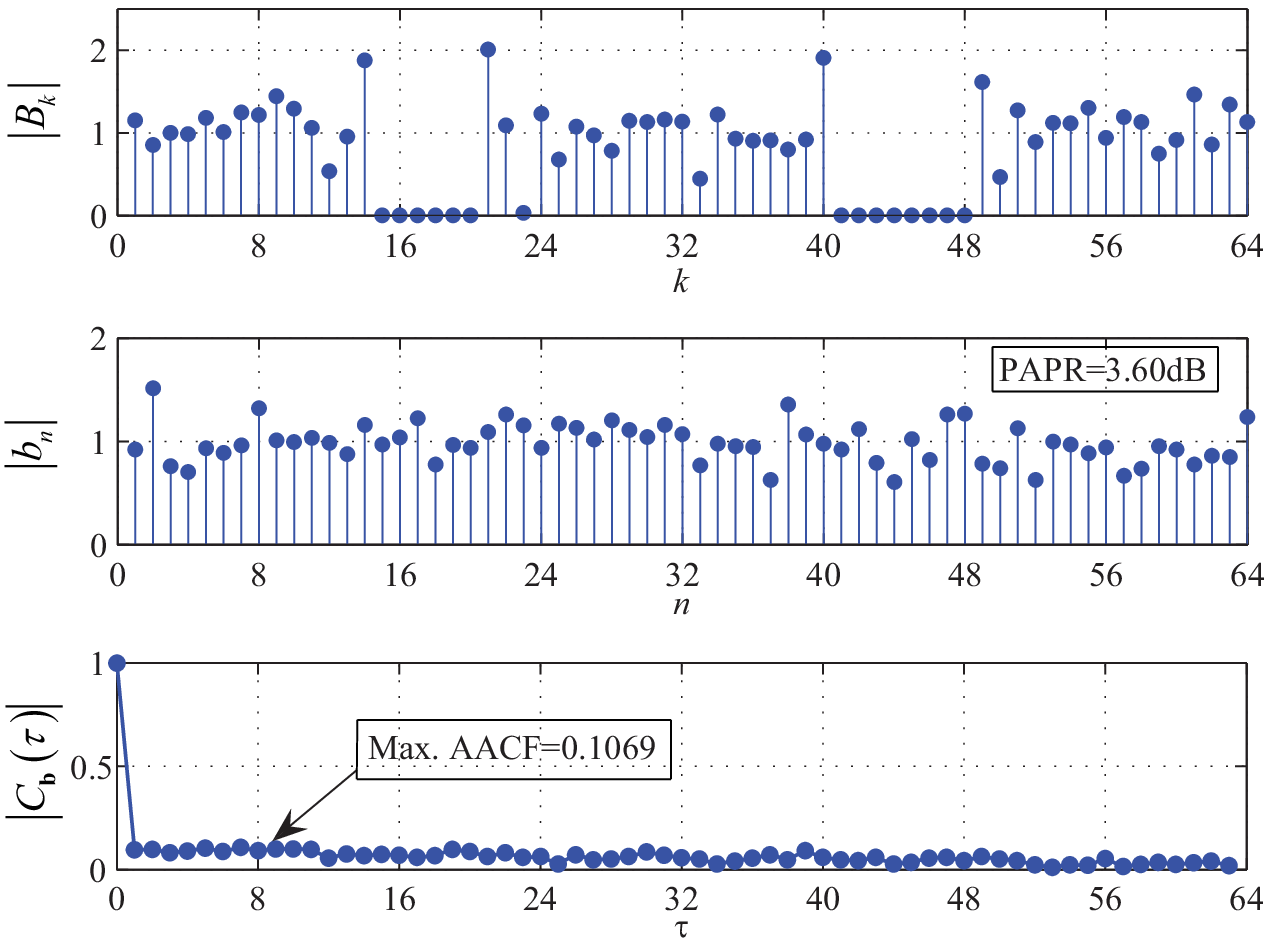}
\label{fig_examp3_b}}
}
\caption{A comparison of two CR sequences (with different $\lambda$) obtained by the proposed algorithm.}
\label{fig_TwoCR}
\end{figure*}

\begin{example}
Applying our proposed algorithm and using the same spectrum hole constraint in \textit{Example 2}, two CR sequences (which have penalty factors of $\lambda=0.15$ and $\lambda=0.95$, respectively) are shown in Appendix A. Similar to Fig. 3-a, the time- and frequency- domain magnitudes, and their AACF of these two CR sequences are shown in Fig. 5. It is seen that the CR sequence with
$\lambda=0.15$ displays almost flat time-domain magnitudes, giving rise to a PAPR of 1.10dB and maximum out-of-phase AACF magnitude (normalized) of 0.1377.
In contrast, when $\lambda$ is increased to 0.95, the maximum out-of-phase AACF magnitude (normalized) is reduced to 0.1069 accordingly, whereas its PAPR is increased to 3.60dB. This shows the tradeoffs between the PAPR and the AACF. Also, these two CR sequences are superior if compared with the CR sequence $\mathbf{b}_4$ in \textit{Example 2}, which has PAPR of 4.10dB and maximum out-of-phase AACF magnitude of 0.2131.
\end{example}

\vspace{0.1in}
\begin{example}
Consider the spectrum hole constraint below which is used in IEEE 802.11a to specify the nulled DC-subcarrier and guard bands.
\begin{equation}\label{spe_const_18}
\Omega = \{0\}\cup \{27,28,\cdots,37\}.
\end{equation}
Applying $\Omega$ in (\ref{spe_const_18}), we show in Fig. 6 a tradeoff comparison of our proposed algorithm and that in [\ref{Tsai11}, Section IV] for their capabilities in suppressing the PAPR and the AACF. To do this, we simulate the cases for $\lambda=i/1000$, where $i$ ranges over $[0,1,\cdots,1000]$. One can see that our proposed algorithm gives rise to a higher probability of lower PAPR and lower maximum out-of-phase AACF.
\end{example}

\vspace{0.1in}

\begin{remark}
Compared with the algorithms in \cite{He10}, an advantage of our proposed algorithm is that it leads to CR sequences with zero spectral leakage over the spectrum holes, as seen from the top two subplots for $|B_k|$ in Fig. 5. Also, compared with the algorithm in [\ref{Tsai11}, Section IV], our proposed algorithm is more effective in suppressing the PAPR and the AACF, as shown in \textit{Example 4}.
\end{remark}

\begin{figure}
\centering
  \includegraphics[width=3in]{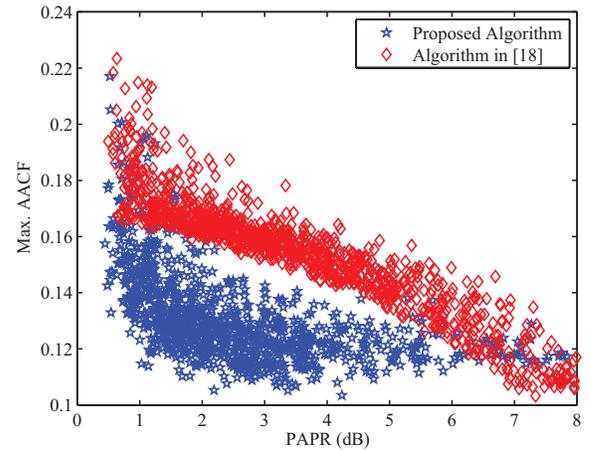}
  \caption{A comparison of the PAPR-AACF suppression capabilities between the proposed algorithm and the algorithm in [\ref{Tsai11}, Section IV].}
  \label{fig_PAPRvsAACF}
\end{figure}

\vspace{0.1in}

{Before we close this section, we present below some comments on the complexity of our proposed quasi-ZCZ CR sequence construction including the optimization algorithm.

\begin{remark}
%Our proposed quasi-ZCZ CR sequence construction consists of the analytical construction followed by the numerical optimization. In the analytical construction step, a $(K,L,Z)$ ``seed" ZCZ sequence set is needed. This is easy owing to abundant constructions in the literature for ZCZ sequences. To construct our proposed quasi-ZCZ CR sequences of length $LN$, therefore, there is a $L$-times complexity reduction because the overall complexity is mainly determined by the numerical optimization step which optimizes the PAPR and AACF of a waveform of length $N$. This ``$L$-times complexity reduction" however cannot be achieved by the algorithms in \cite{He10} and [\ref{Tsai11}, Section IV] because both algorithms are numerically-designed for the entire sequence length. Also, our proposed numerical algorithm has the complexity merit of that in \cite{He10} because the matrix operations involved in (\ref{minimax_PACF})-(\ref{num_alg_stp2}) can be done via fast Fourier transform (FFT). In addition, compared to the algorithm in [\ref{Tsai11}, Section IV] which arises by solving a relaxed non-convex problem, our proposed method of finding the power spectrum of the CR waveform is a typical convex problem which can be easily solved by some optimization tools, such as CVX, MATLAB, etc. In a summary, we think the low complexity feature of our proposed quasi-ZCZ CR sequence construction is of strong interest in a CR system with dynamic spectrum hole constraint.
Our proposed quasi-ZCZ CR sequence construction consists of two essential steps: seed sequence selection followed by numerical optimization. In the seed sequence selection step, a $(K,L,Z)$ ``seed" ZCZ sequence set is needed, but this is easy because there are abundant choices available in the literature for ZCZ sequences \cite{Tang01}\cite{Torii04}. Therefore, to construct our proposed quasi-ZCZ CR sequences of length $LN$, we have a $L$-times optimization complexity reduction over \cite{He10} and [\ref{Tsai11}, Section IV] because our overall complexity is mainly determined by the second numerical optimization step to be performed on a CR waveform of length $N$, while \cite{He10} and [\ref{Tsai11}, Section IV] need to perform numerical optimization over the entire length $LN$. In addition, compared to the algorithm in [\ref{Tsai11}, Section IV] which has to solve a relaxed non-convex optimization problem, our proposed algorithm is convex in nature and can be easily solved by some optimization tools, such as CVX, Matlab, etc. In short, the low complexity feature of our proposed quasi-ZCZ CR sequence construction will be useful in a CR channel with dynamic spectrum hole constraint.
\end{remark}}

\section{A CR-CDMA Scheme based on the Proposed Quasi-ZCZ Sequences}
In this section, we present a CR-CDMA scheme which uses the proposed quasi-ZCZ CR sequence set. %Such a CR-CDMA system is also called a time-frequency-spread CDMA (TFS-CDMA) system in this paper, due to the ``time-frequency synthesis" nature of the proposed quasi-ZCZ CR-CDMA sequences.
Multipath fading channels are considered, each modeling as a discrete-time finite impulse response filter, i.e., $h[n]=\sum_{\tau=0}^{T_{\max}}h_\tau\delta[n-\tau]$, where $\delta[n]$ and $T_{\max}$ denote the discrete Dirac pulse and the maximum channel delay, respectively. For every transmitter in a CR-CDMA, a cyclic prefix (CP), which is a repetition of the last $T_g \geq T_{\max}$ samples in a data block\footnote{In this paper, a data block refers to a data symbol spread by a signature sequence in a CR-CDMA system.}, is inserted at the beginning of each data block, as shown in Fig.~\ref{FIG:BLKTX}. This is to maintain the periodic correlations of the proposed quasi-ZCZ CDMA sequences in quasi-synchronous multipath fading channels.

Let us consider a quasi-synchronous $K$-user CR-CDMA system where every user is assigned a specific spreading sequence constructed in (\ref{equ_propose_cons}). After passing through an asynchronous multipath fading channel, the received signal vector of the $i$th user after removing the CP is given by
\begin{equation}
\label{RXsignal}
{\bf{r}}_i = \sum \limits_{j=1}^{K} \sum \limits_{p=0}^{T_{max}} h_{p}^{i, j}  d_j  T^p (\mathbf{c}_j) +\mathbf{n}_i,
\end{equation}
where $h_{p}^{i, j}$ is the $p$th channel coefficient between the $i$th and $j$th users, and $d_j$ is the data symbol of the $j$th user. Suppose that the $i$th user is the user of interest in the receiver. Carrying out the despreading by the local reference sequence ${\bf{c}}_i$, we have
\begin{equation}\label{RXPACF2}
\begin{split}
<\mathbf{r}_i,\mathbf{c}_i> & =  \sum \limits_{j=1}^{K}  \sum \limits_{p=0}^{T_{\max}} h_{p}^{i,j} d_j \left <T^p(\mathbf{c}_j),\mathbf{c}_i \right > + \left <\mathbf{n}_i,\mathbf{c}_i \right > \\
                           & =\underbrace{\sum \limits_{p=0}^{T_{max}} h_{p}^{i, i} d_i R_{\mathbf{c}_i} \left( p \right)}_{\text{MPI}}\\
     &+  \underbrace{\sum \limits_{j=1, i \neq j}^{K}  \sum \limits_{p=0}^{T_{\max}} h_{p}^{i, j} d_j R_{\mathbf{c}_j,{\bf{c}}_i} \left( p \right)}_{\text{MUI}}+  \underbrace{ \left <\mathbf{n}_i,\mathbf{c}_i \right >}_{\text{Noise}},
\end{split}
\end{equation}
where the first term and the second term in the second step of (\ref{RXPACF2}) represent multipath interference (MPI) and MUI, respectively.

\vspace{0.1in}
\begin{remark}
Note that the MUI from the $j$th user can be completely eliminated if
\begin{equation}
R_{\mathbf{c}_j,{\bf{c}}_i} \left( p \right)=0,~~~ p \in [0, T_{\max}].
\end{equation}
Clearly, this can be ensured if we have
\begin{equation}\label{zeroMUI_cond}
T_{\max} \leq NZ-N,
\end{equation}
by recalling the cross-correlation property of the proposed quasi-ZCZ CR sequences shown in \eqref{NewCC}. In short, the proposed scheme provides MUI-free performance provided that the maximum channel delay is not greater than the ZCCZ width.
\end{remark}
\vspace{0.1in}

\begin{figure}
\centering
  \includegraphics[width=3.2in]{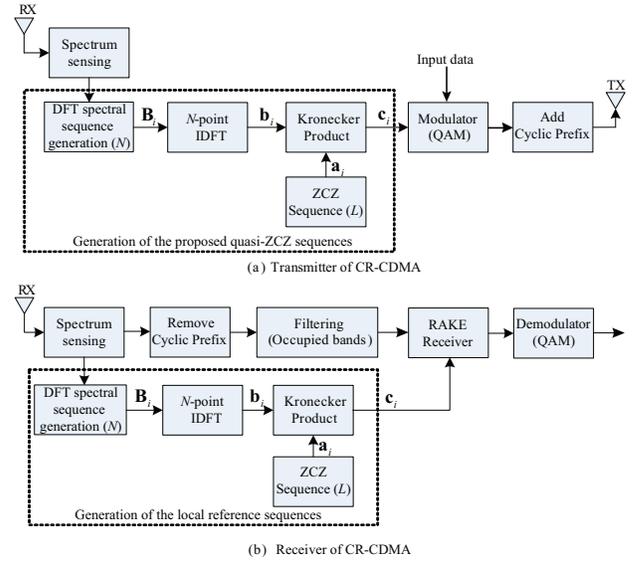}
  \caption{Block diagram of the proposed CR-CDMA system.}
  \label{FIG:BLKTX}
\end{figure}

Furthermore, when (\ref{zeroMUI_cond}) is satisfied, \eqref{RXPACF2} can be simplified to
\begin{equation}
\label{RXPACF3}
<\mathbf{r}_i,\mathbf{c}_i> =\underbrace{d_i \cdot \sum \limits_{p=0}^{T_{\max}} h_{p}^{i, i} R_{\mathbf{c}_i} \left( p \right)}_{\text{MPI}}+  \underbrace{ \left <\mathbf{n}_i,\mathbf{c}_i \right >}_{\text{Noise}}.
\end{equation}
As a result, the proposed scheme transforms the multiple access problem in CR-CDMA to a traditional single-user spread-spectrum problem over multipath fading channel. To exploit the multipath diversity, a maximal-ratio combining (MRC) based RAKE receiver is employed, as shown in Fig.~\ref{FIG:BLKTX}.

\section{Simulation Results}
In this section, we evaluate the proposed CR-CDMA system performance. The main objective is to show the interference-free achievability of the proposed CR-CDMA in quasi-synchronous cognitive radio network (CRN) with multipath fading channels. To this end, two other spreading-based CRNs are considered for comparison purpose, i.e., non-contiguous multicarrier-CDMA (MC-CDMA) \cite{Hara}, \cite{Boroujeny} and transform domain communication system (TDCS) \cite{Swackhammer}-\cite{QCCSK}. In particular, by adopting the baseband symbol modulation scheme called cyclic code shift keying (CCSK) \cite{Dillard}, a TDCS is capable of providing reliable communications with low spectral density using spectral nulling and frequency domain spreading. Simulation settings for the three CRNs (each consisting of four or more CR users) are shown in Table I.

\begin{table*}
\newcommand{\tabincell}[2]{\begin{tabular}{@{}#1@{}}#2\end{tabular}}
\label{table1}
\centering
\caption{Simulation Settings for Three Spreading-Based CRNs}
\begin{tabular}{|c||c|c|c|c|}
  \hline
  % after \\: \hline or \cline{col1-col2} \cline{col3-col4} ...
   Parameters & MC-CDMA &  MC-CDMA & TDCS \cite{Chakravarthy-TDCS} & CR-CDMA (Proposed) \\
   \hline
   \hline
  Spreading codes & Zadoff-Chu & Polyphase Random & Polyphase Pseudo-Random & Quasi-ZCZ (Proposed) \\
  \hline
   Codes of length & 1024 &  1024 & 1024 &  1024 ($L=16, N=64$) \\
  \hline
  Orthogonality & No  &No & No & Yes \\
   \hline
  Modulation & QPSK & QPSK & CCSK & QPSK\\
  \hline
  Receiver & MMSE-FDE & MMSE-FDE & MMSE-FDE & MRC-RAKE \\
  \hline
  Channels & \multicolumn{4}{|c|}{COST207RAx6 \cite{Cost207}} \\
  \hline
  Entire bandwidth & \multicolumn{4}{|c|}{10MHz}\\
  \hline
  Unavailable bands & \multicolumn{4}{|c|}{2.5$\sim$3.75 MHz and 6.25$\sim$7.5 MHz}\\
  \hline
  \end{tabular}
\end{table*}

We also examine the near-far resilience of the three CRNs. In CDMA communications, the ``near-far effect" refers to a phenomenon that the desired user signal is overwhelmed by the interfering user signals from within a region which is closer to the receiver. We will show that the proposed quasi-ZCZ CR sequences give rise to ``near-far immune" CR-CDMA systems, in contrast to ``near-far sensitive" MC-CDMA and TDCS.

Let $E_k$ ($1\leq k\leq K$) be the received signal energy of the $k$th user. Denote by ${NF}_{i,j}=E_j/E_i$ the "near-far factor" between the $j$th user (an interfering user) and the $i$th user (the desired user), where $i\neq j$. In addition, whenever the subscript is omitted, unless other specified, a near-far factor of $NF$ means that all interfering users have identical signal energy of $E_iNF$.

\subsection{BER Performance in Quasi-Synchronous Multipath Fading Channels}
To evaluate the BER performance in quasi-synchronous multipath fading channels, a CP length of $1/4$ (relative to the total data block duration) is adopted for the above three CRNs. For a MC-CDMA system, the one-tap equalizer, i.e., minimal mean square error frequency-domain equalization (MMSE-FDE), is used in the receiver. Zadoff-Chu sequences and polyphase random sequences are considered. For a CR-CDMA system, MRC-RAKE receiver is used to exploit the multipath diversity. It is seen from Fig.~\ref{fig_BER_sim} that the single-user BER performance is achieved by the proposed CR-CDMA for $NF=10$dB, indicating that the proposed CR-CDMA is near-far immune. In contrast, the BER gap between MC-CDMA (or TDCS) and the single-user case grows larger for increasing SNR (denoted by $E_b/N_0$) because in this case the MUI becomes dominant. This result shows that both MC-CDMA and TDCS are near-far sensitive owing to the non-zero MUI. Note that for MC-CDMA, the BER curves using Zadoff-Chu sequences and polyphase random sequences are very close. This is because the spectral nulling destroys the orthogonality of Zadff-Chu sequences, resulting in a correlation property like that of a random sequence. {It is noted that our proposed CR-CDMA system features an MRC-RAKE receiver structure and thus it is capable of achieving multipath diversity which cannot be achieved by an (uncoded) OFDM system. Specifically, the average BER is determined by $n$-order diversity combining (where $n$ is the number of uncorrelated channel paths) \cite{CDMA-handBook} and is inversely proportional to the MPI determined by the sum of multipath tap gain weighted by the PACF sidelobe [as shown in (\ref{RXPACF3})].}

\begin{figure}
\centering
  \includegraphics[width=3.2in]{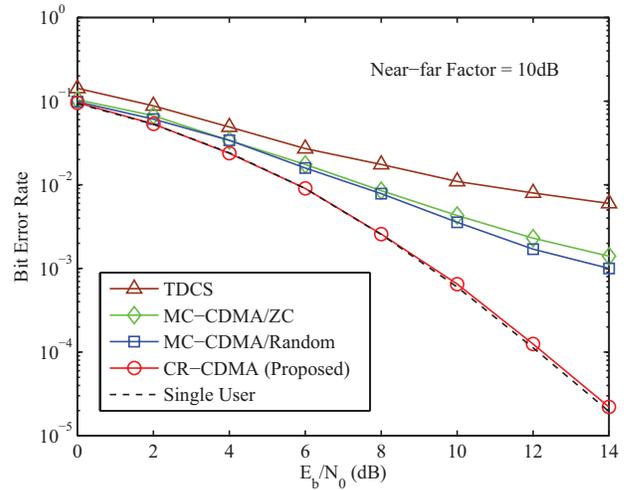}
  \caption{BER performance of CR-CDMA (proposed), MC-CDMA and TDCS in quasi-synchronous multipath fading channels}
  \label{fig_BER_sim}
\end{figure}

\subsection{BER Performance versus MUI}
In this part, we evaluate the BER performance versus MUI. Fig.~\ref{fig_BERMUI}-a shows the effect of different near-far factors on the BER performances of the three spreading-based CRNs with $K$=4 and the near-far factor $NF$ ranging from 0dB to 20dB. It is seen that the BER performance of MC-CDMA (or TDCS) degrades rapidly as $NF$ increases. For example, the BER performance of MC-CDMA degrades from $10^{-3}$ to $10^{-1}$ when $NF$ changes from 0dB to 20dB. In contrast, the BER curve of the proposed CR-CDMA agrees very well with the single-user one even at a large $NF$ region, further verifying the near-far immunity of CR-CDMA.

{Fig.~\ref{fig_BERMUI}-b demonstrates the effect of number of users to the BER performance, where $NF$=10dB. {We employ a bound-achieving ``seed" ZCZ sequence set \cite{Popovic10} with $K=16$} so as to support up to 16 users. In addition, a practical CR channel scenario with four spectrum holes and $50\%$ available subcarriers is considered. It is clear that the BER performance of the proposed CR-CDMA is insensitive to the number of users due to zero cross-correlation in the zone, whereas both the MC-CDMA and TDCS systems suffer from increased MUI (leading to severe BER degradation) for increasing number of users}. For practical decentralized CRNs, such as CR Ad-hoc wireless sensing networks where near-far problem commonly exists, our proposed CR-CDMA is superior as it is near-far immune and does not require a costly power control loop.

\begin{figure*}%[!h]
%\label{Fig_BERMultipath}
\centerline{
\subfloat[Near-Far Effect]{\includegraphics[width=3.2in]{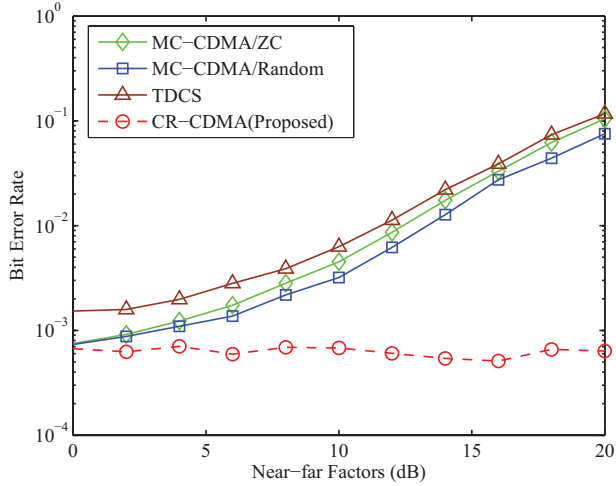}
}
\hfil
\subfloat[Number of Users]{\includegraphics[width=3.2in]{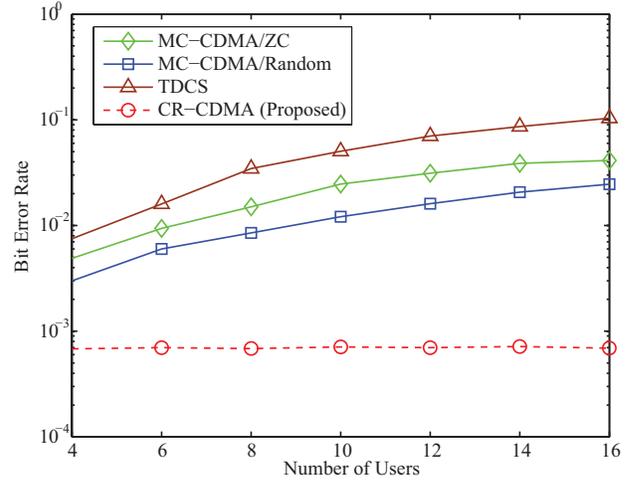}
}
}
\caption{BER performance versus multiuser interference for CR-CDMA (proposed), MC-CDMA and TDCS, where $E_b/N_0 = 10$dB.}
\label{fig_BERMUI}
\end{figure*}

\subsection{Spectrum Sensing Mismatch}
Denote by
\begin{displaymath}
\begin{split}
{\mathbf{S}}_{T} & =\left[ S_0^{T}, S_1^{T}, \cdots, S_{N-1}^{T} \right]^{\text{T}},\\
{\mathbf{S}}_{R} & =\left[ S_0^{R}, S_1^{R}, \cdots, S_{N-1}^{R} \right]^{\text{T}},
\end{split}
\end{displaymath}
 the subcarrier marking
vectors detected by the transmitter and the receiver, respectively. In practice, channel uncertainties may lead to spectrum sensing mismatch between transmitter and receiver \cite{Cost207}, i.e., ${\mathbf{S}}_{T}\neq {\mathbf{S}}_{R}$. To quantify the spectrum sensing mismatch, we define the correlation coefficient $\eta$ \cite{Martin} as follows,
\begin{equation}
\eta  = \frac{{{\mathbf{S}}^{\text{T}}_{T}{\mathbf{S}}_{R} }} {{\left\|
{{\mathbf{S}}_{T} } \right\|\left\| {{\mathbf{S}}_{R} } \right\|}}.
\end{equation}
In particular, the perfect spectrum sensing is referred to if $\eta=1$, i.e., ${\mathbf{S}}_{T} = {\mathbf{S}}_{R}$. Otherwise, the spectrum sensing mismatch occurs.

{Fig.~\ref{FIG:BERMISMATCH} shows the BER performances of CR-CDMA for $\eta\in \left\{87\%,89\%,92\%,96\%\right\}$.
Specifically, for $\eta\in \left\{87\%,89\%\right\}$, we used a CR channel setting with four spectrum holes and $50\%$ available subcarriers, whereas for $\eta\in \left\{92\%,96\%\right\}$, we used two spectrum holes and $75\%$ available subcarriers. For analysis, we study the $E_b/N_0$ loss (compared to that of the perfect spectrum sensing case with $\eta=100\%$) at BER = $10^{-3}$. One can see that the BER curves for $\eta=96\%$ and $92\%$ are very close to that of the perfect spectrum sensing, with only $0.15$dB and $0.4$dB $E_b/N_0$ loss, respectively. As expected, the BER performance is degraded gradually as $\eta$ decreases. This is because smaller $\eta$ leads to less effective signal power captured by the receiver. For instance, there is $2$dB $E_b/N_0$ loss in the case of $\eta=87\%$. In contrast, when non-continuous OFDM (NC-OFDM) \cite{Budiarjo} is used, the BER curves under the same $\eta$ settings show very high error floors at all $E_b/N_0$ levels, indicating that it is very sensitive to spectrum sensing mismatch. Therefore, our proposed CR-CDMA system is more robust against modest spectrum sensing mismatch.}

\begin{figure}
\centering
  \includegraphics[width=3.2in]{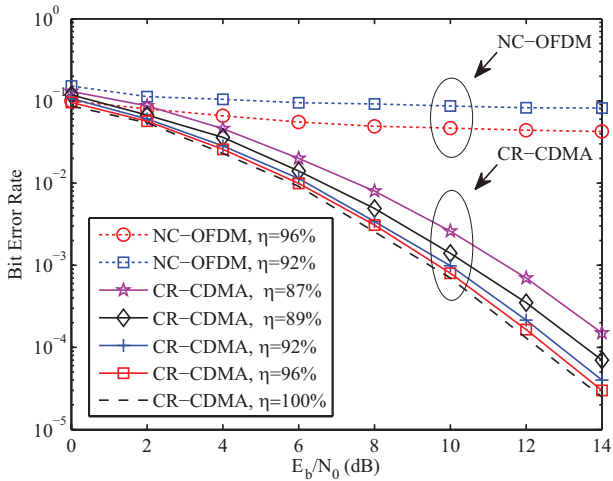}
  \caption{BER comparison of CR-CDMA and NC-OFDM in the case of spectrum sensing mismatch}
  \label{FIG:BERMISMATCH}
\end{figure}

\section{Conclusions}
To achieve interference-free CR-CDMA communications, a design barrier is the ¡°spectrum hole constraint¡± in the CR channel, which renders traditional ZCZ sequences unusable because the ZCZ correlation property of the ZCZ sequences will be destroyed by spectral nulling. Motivated by this, we have presented in Section III.A the systematic construction of a novel $(K,NL,NZ-N)$ quasi-ZCZ CR sequence set with set size $K$, length $NL$, and whose every pair of distinct sequences has zero cross-correlation zone (ZCCZ) width of $NZ-N$. The key idea in the proposed construction is to perform the Kronecker product of a $(K,L,Z)$ ZCZ sequence set\footnote{where $K$ denotes the set size, $L$ the sequence length, and $Z$ the ZCZ width.} and a waveform set (each of length $N$) which satisfies spectrum hole constraint. Due to the property of Kronecker product, the ZCCZ property of our proposed CR sequences is guaranteed to hold for any spectrum hole constraint (which may vary with time and location).

We have also presented a numerical optimization algorithm to further optimize the proposed CR sequences to have low PAPR and low AACF simultaneously. Our optimization algorithm takes advantage of the property that a sequence with low AACF also possesses low PACF. This leads to simplified convex optimization, instead of the non-convex optimization which is necessary in [\ref{Tsai11}, Section IV].

The proposed CR sequences possess zero cross-correlation zone and near-zero auto-correlation zone property, hence they are called ``quasi-ZCZ".  Computer simulations of these quasi-ZCZ sequences show that they are effective in achieving MUI-free and near-far resistant CR-CDMA system in quasi-synchronous multipath fading channels. Compared to non-contiguous OFDM which does not perform spreading, the proposed CR-CDMA is also more robust to spectrum sensing mismatch. {Possible future works of this research include:
\begin{enumerate}
\item No channel coding is considered in this paper. Channel coding is expected to reduce the spreading factor (sequence length) of CR-CDMA system but provide error correction. It will be interesting to investigate the optimum [code rate, spreading factor] pair of the proposed CR-CDMA system under different channel conditions.
\item A recent hot research topic in sequence design is Golay complementary pairs (GCPs) \cite{GOLAY61},\cite{Liu-TIT2013} and complementary sequences \cite{TSENG72},\cite{Liu-TCOM2014}. To design complementary sequence sets with large set size, ``quasi-complementary sequence sets (QCSS)" with low correlations have been proposed \cite{Liu-TCOM2011}-\cite{Liu-IT2013}. It will be interesting if ``complementary CR sequences" can be designed. Here, ``complementary CR sequences" refer to a set of two-dimensional matrices with zero/low correlation sums when subjected to identical row nulling (corresponding to spectral nulling).  Compared to complex-valued CR sequences, we expect the complementary CR sequences to have a smaller alphabet size.
\end{enumerate}
}
\vspace{0.2in}

\appendices

\begin{figure*}
\begin{enumerate}
\item $\lambda=0.15$
\begin{equation}
 ~~~~|\mathbf{b}| =\left [ \begin{array}{l}
    1.0844,    0.9689,    0.9512,    0.8655,    1.0119,    1.0310,    0.9329,    0.8784,    0.9692,1.0832\\
    0.9306,    1.1345,    1.0252,    0.9561,    0.9639,    0.9628,    0.8833,    1.0476,    1.1166,0.9487\\
    1.0493,    1.0227,    0.9900,    0.9454,    1.0702,    0.9735,    1.1100,    1.0073,    1.0217,1.0184\\
    1.0264,    1.0203,    1.0336,    1.0249,    0.8444,    1.0486,    1.0715,    0.9755,    0.9327,1.0599\\
    0.9405,    1.0868,    1.0331,    0.9506,    1.0494,    1.0802,    0.8774,    0.9737,    0.9641,1.1247\\
    1.0245,    1.1301,    0.9918,    0.8887,    1.0272,    0.9673,    0.9471,    1.0037,    1.0594,1.0301\\
    0.8108,    0.9433,    0.9657,    0.9772
\end{array} \right ]
\end{equation}

\begin{equation}
%\begin{split}
  \arg\{\mathbf{b}\}  = \left [ \begin{array}{l}
    3.5907,    5.0480,    4.9631,    1.0373,    1.2193,    0.4437,    0.1947,    1.6065,    0.4091,1.9281\\
    3.2030,    4.0366,    0.2956,    5.0776,    5.6550,    2.6589,    1.1841,    1.6763,    0.0331,3.8943\\
    4.2418,    4.4046,    2.0934,    1.1538,    2.2643,    4.2935,    3.7441,    4.2361,    6.2214,5.4689\\
    3.5310,    4.9327,    1.8726,    4.9712,    5.6652,    2.6973,    0.5947,    1.6379,    0.7703,2.5467\\
    2.1298,    1.0033,    1.5913,    1.1897,    0.0287,    0.2438,    5.4877,    5.1932,    1.7509,4.0920\\
    5.5475,    0.9042,    5.0590,    1.1550,    4.6100,    2.7825,    5.7043,    0.8719,    1.5994,6.0930\\
    0.4790,    6.2167,    3.6842,    1.5993
\end{array} \right ]\\
%\end{split}
\end{equation}

\item $\lambda=0.95$
\begin{equation}
%\begin{split}
~~~~|\mathbf{b}| = \left [ \begin{array}{l}
    0.9202,    1.5131,    0.7567,    0.7006,    0.9339,    0.8890,    0.9586,    1.3201,    1.0091,    0.9961\\
    1.0343,    0.9875,    0.8789,    1.1583,    0.9699,    1.0377,    1.2240,    0.7759,    0.9650,    0.9370\\
    1.0886,    1.2633,    1.1559,    0.9409,    1.1706,    1.1339,    1.0144,    1.2027,    1.1104,    1.0400\\
    1.1571,    1.0714,    0.7685,    0.9796,    0.9505,    0.9447,    0.6260,    1.3551,    1.0689,    0.9795\\
    0.9210,    1.1189,    0.7910,    0.6055,    1.0201,    0.8188,    1.2632,    1.2663,    0.7815,    0.7412\\
    1.1276,    0.6259,    0.9985,    0.9709,    0.8844,    0.9435,    0.6671,    0.7342,    0.9525,    0.9222\\
    0.7751,    0.8616,    0.8485,    1.2352
\end{array} \right ]\\
\end{equation}

\begin{equation}
%\begin{split}
  \arg\{\mathbf{b}\} = \left [ \begin{array}{l}
    4.8171,    3.2030,    0.4232,    1.1852,    4.4278,    2.9172,    5.0056,    0.9800,    4.2892,    3.8035\\
    3.2985,    4.0981,    4.7946,    1.9267,    3.5038,    1.4390,    4.9246,    2.8596,    3.6866,    3.6402\\
    3.2894,    4.8848,    1.9006,    1.3994,    1.9457,    2.4315,    5.2810,    3.5875,    4.7004,    3.9169\\
    0.7822,    5.1472,    0.3984,    2.7239,    3.2930,    5.7733,    5.1629,    5.5392,    1.2946,    0.2351\\
    1.4322,    6.2033,    0.1863,    0.3297,    4.0488,    2.9717,    2.5382,    3.4520,    4.7880,    3.5623\\
    4.3384,    0.5327,    2.2144,    2.6802,    0.7534,    0.4189,    0.7884,    5.6823,    2.3551,    3.9450\\
    4.7848,    0.5745,    5.8238,    4.5391
\end{array} \right ]\\
\end{equation}

\end{enumerate}
\end{figure*}

%\section{Proof of (\ref{PwrSpecFFT})}
%We need to prove (\ref{PwrSpecFFT}) which is restated below:
%\begin{displaymath}%\label{PwrSpecFFT}
%R_{\mathbf{b}}(\tau)=\sqrt{N}\mathcal{F}_N(\tau,:) \vec{\beta},~~\tau\in\{1,2,\cdots,N-1\}.
%\end{displaymath}
%Note that $\mathbf{b}=\mathcal{F}^{\text{H}}_N \mathbf{B}$, i.e.,
%\begin{displaymath}%\label{FFT_b}
%b_i=\frac{1}{\sqrt{N}}\sum\limits_{j=0}^{N-1} \omega_N^{ij}B_j,~~\omega_N=\exp\left (\frac{\sqrt{-1}2\pi}{N} \right ).
%\end{displaymath}
%Thus, we have
%\begin{displaymath}
%\begin{split}
%R_{\mathbf{b}}(\tau) & = \sum\limits_{i=0}^{N-1}b_i b^{*}_{i+\tau}\\
%                  & = \frac{1}{N}\sum\limits_{i=0}^{N-1} \left[\sum\limits_{j=0}^{N-1} \omega^{ij}_N B_j \right ] \cdot \left[\sum\limits_{j'=0}^{N-1} \omega_N^{-ij'-\tau j'}B^*_{j'} \right ]\\
%                  & = \frac{1}{N}\sum\limits_{j=0}^{N-1}\sum\limits_{j'=0}^{N-1}B_j B^{*}_{j'}\omega_N^{-\tau j'}\left [\sum\limits_{i=0}^{N-1} \omega_N^{i(j-j')}\right ]\\
%                  & = \sum\limits_{j=0}^{N-1} |B_j|^2 \omega_N^{-\tau j}\\
%                  & = \sqrt{N}\mathcal{F}_N(\tau,:) \vec{\beta},
%\end{split}
%\end{displaymath}
%which is (\ref{PwrSpecFFT}).

\section{Two CR sequences obtained from our proposed algorithm}
For ease of presentation, denote by $|\mathbf{b}|$ and $\arg\{\mathbf{b}\}$ the magnitude vector and the phase vector (in radian) of $\mathbf{b}$, respectively.
Also, each vector is arranged in a matrix form. Therefore, to get $|\mathbf{b}|$ (which is a column vector), for instance, just read out its corresponding matrix row by row, followed by the transpose operation.

\section*{Acknowledgment}
The authors would like to thank Shu FANG, Yue XIAO and Gang WU for many useful comments on cognitive radio code division multiple access.

%\clearpage


\begin{thebibliography}{l}

\bibitem{Haykin}
S. Haykin, ``Cognitive radio: brain-empowered wireless communications," \emph{IEEE J.  Select. Areas Commun.}, vol. 23, pp. 201-220, Feb. 2005.

\bibitem{Akyildiz}
I. F. Akyildiz, W. Y. Lee, M. C. Vuran and S. Mohanty, ``Next generation/dynamic spectrum access/cognitive radio wireless networks: a survey," \emph{Comput. Netw.}, vol. 50, pp. 2127-2159, Sept. 2006.

\bibitem{Tevfik}
T. Yucck and H. Arslan, ``A survey of spectrum sensing algorithms for cognitive radio applications," \emph{IEEE Comm. Survey Tutorials}, vol. 11, pp. 116-130, 2009.

\bibitem{Budiarjo}
I. Budiarjo, H. Nikookar, and L. P. Ligthart, ``Cognitive radio modulation techniques," \emph{IEEE
Signal Process. Mag.}, vol. 25, pp. 24-34, Nov. 2008.

\bibitem{Suehiro94}
N. Suehiro, ``A signal design without co-channel interference for approximately
synchronized CDMA system," \emph{IEEE J. Sel. Areas Commun.}, vol. SAC-12, pp. 837-841, June 1994.

\bibitem{Long98}
B. Long, P. Zhang, and J. Hu, ``A generalized QS-CDMA system and the
design of new spreading codes," \emph{IEEE Trans. Veh. Technol.}, vol. 47, pp.
1268-1275, Nov. 1998.

\bibitem{Tang06}
X. Tang and W. H. Mow, ``Design of spreading codes for quasi-synchronous
CDMA with intercell interference," \emph{IEEE J. Sel. Areas Commun.}, vol. 24,
pp. 84-93, Jan. 2006.

\bibitem{Fan99}
P. Fan, N. Suehiro, N. Kuroyanagi, and X. Deng, ``A class of binary sequences
with zero correlation zone," \emph{IEE Electron. lett.}, vol. 35, pp. 777-779, May
1999.

\bibitem{Tang01}
X. Tang, P. Fan, D. Li, and N. Suehiro, ``Binary array set with zero correlation zone," \emph{IEE Electron. lett.}, vol. 37, pp. 841-842, June 2001.

\bibitem{Torii04}
H. Torii, M. Nakamura, and N. Suehiro, ``A new class of zero-correlation zone sequences," \emph{IEEE Trans. Inf. Theory}, vol. 50, pp. 559-565, Mar. 2004.

\bibitem{Rath06}
A. Rathinakumar and A. K. Chaturvedi, ``A new framework for constructing mutually orthogonal complementary
sets and ZCZ sequences," \emph{IEEE Trans. Inf. Theory}, vol. 52, pp. 3817-3826, Aug. 2006.

\bibitem{Tang10}
X. Tang, P. Fan, and J. Lindner, ``Multiple binary ZCZ sequence sets with
good cross-correlation property based on complementary sequence sets," \emph{IEEE Trans. Inf. Theory}, vol. 56, pp. 4038-4045, Aug. 2010.

\bibitem{Zadoff}
R. L. Frank and S. A. Zadoff, ``Phase shift pulse with good periodic correlation properties," \emph{IRE Trans. Inf. Theory}, vol. IT-8, pp. 381-382, Oct. 1962.

\bibitem{Chu}
D. C. Zhu, ``Polyphase codes with good periodic correlation properties," \emph{IEEE Trans. Inf. Theory}, vol. IT-18, pp. 531-533, July 1972.

\bibitem{Popovic}
B. M. Popovic, ``Generalized chirp-like polyphase sequences with optimum correlation properties," \emph{IEEE Trans. Inf. Theory}, vol. 38, pp. 1406-1409, July 1992.

\bibitem{Mow95}
W. H. Mow, ``A Unified Construction of Perfect Polyphase Sequences," in \emph{Proc. 1995 IEEE International Symposium on Information Theory (ISIT '95)}, Canada, Sept. 1995, pp. 459.

\bibitem{He10}
H. He, P. Stoica, and J. Li, ``Waveform Design with Stopband and Correlation Constraints for Congnitive Radar," in \emph{2010 2nd International Workshop on Cognitive Information Processing}, Italy, June 2010, pp. 344-349.

\bibitem{Tsai11}\label{Tsai11}
L.-S. Tsai, W.-H. Chung, and D.-S. Shiu, ``Syntehsizing low autocorrelation and low PAPR OFDM sequences under spectral constraints through convex optimization and GS algorithm," \emph{IEEE Trans. Signal Process.}, vol. 59, pp. 2234-2243, May 2011.

\bibitem{Hu13}
S. Hu, G. Bi, Y. L. Guan, and S. Li, ``TDCS-based Cognitive Radio Networks with Multiuser Interference Avoidance," \emph{IEEE Trans. Commun.}, vol.61, no.12, 4828-4835, Dec. 2013.

\bibitem{Stark}
W. E. Stark and D. V. Sarwate, ``Kronecker sequences for spread-spectrum communication," \emph{IEE Proc.}, vol. 128, pp. 104-109, Apr. 1981.

\bibitem{Jiang}
T. Jiang and Y. Wu, "An overview: peak-to-average power ratio reduction techniques for OFDM signals," \emph{IEEE Trans. Broadcasting}, vol.54, no.2, 257-268, June 2008.

\bibitem{Tang00}
X. Tang, P. Fan and S. Matsufuji, ``Lower bounds on correlation of spreading sequence set with low or zero correlation zone," \emph{IEE Electron. Letters}, vol. 36, pp. 551-552, Mar. 2000.

\bibitem{Popovic10}
B. M. Popovic and O. Mauritz, ``Generalized chirp-like polyphase sequences with zero correlation zone," \emph{IEEE Trans. Inf. Theory}, vol. 56, pp. 2957-2960, June 2010.

\bibitem{Stoica09}
P. Stoica, H. He, and J. Li, ``New algorithms for designing unimodular sequences with good correlation properties," \emph{IEEE Trans. on
Signal Process.}, vol. 57, no. 4, pp. 1415-1425, Apr. 2009.

\bibitem{GS72}
R. Gerchberg and W. Saxton, ``A practical algorithm for the determination
of the phase from image and diffraction plane pictures," \emph{Optik}, vol. 35, pp. 237-246, 1972.

\bibitem{Stoica09-2}
P. Stoica, H. He, and J. Li, ``On designing sequences with impulse-like periodic correlation," \emph{IEEE Signal Process. Lett.}, vol. 16, pp. 703-706, Aug. 2009.

%\bibitem{Proakis}
%J.G. Proakis and M. Salehi, ``Digital communications (Fifth edition)," \emph{McGraw-Hill}, 2008.

\bibitem{Hara}
S. Hara and R. Prasad, ``Overview of multi-carrier CDMA," \emph{IEEE Commun. Mag.}, vol. 35, pp. 126-133, Dec. 1997.

\bibitem{Boroujeny}
B. F. Boroujeny and R. Kempter, ``Multicarrier communications techniques for spectrum sensing and communications in cognitive radio," \emph{IEEE Commun. Mag.}, vol. 46, pp. 80-85, Apr. 2008.

\bibitem{Swackhammer}
P. J. Swackhammer, M. A. Temple and R. A. Raines, ``Performance simulation of a transform domain
communication system for multiple access applications," in \emph{Proc. IEEE Military Commun.
Conf.}, US, 31 Oct. - 03 Nov.,  1999, pp. 1055-1059.

\bibitem{Chakravarthy-TDCS}
V. Chakravarthy, A. S. Nunez, and J. P. Stephens, ``TDCS, OFDM, and MC-CDMA: a brief tutorial," \emph{IEEE Radio Commun.}, vol. 43, pp. 11-16, Sept. 2005.

\bibitem{TDCSCluster}
S. Hu, G. Bi, Y. L. Guan, and S. Li, ``Cluster-based transform domain communication systems for high spectrum efficiency," \emph{IET Commun.}, vol. 6, no. 16, pp. 2734¨C2739. Nov. 2012

\bibitem{QCCSK}
S. Hu, G. Bi, Y. L. Guan, and S. Li, ``Spectrally efficient transform domain communication system with quadrature cyclic code shift keying," \emph{IET Commun.}, vol. 7, no. 4, pp. 382¨C390, Mar. 2013.

\bibitem{Dillard}
G. M. Dillard, M. Reuter, J. Zeidler, and B. Zeidler, ``Cyclic code shift keying: a low probability
of intercept communication technique," \emph{IEEE Trans. Aerosp. Electron. Syst.}, vol.39, pp. 786-798, July 2003.

\bibitem{CDMA-handBook}
J. S. Lee and L. E. Miller, \emph{CDMA Systems Engineering Handbook}, Mobile Communications series, Artech House Publishers, 1998.

\bibitem{Cost207} 3GPP Specification Series: ``Radio transmission and reception, 3GPP TS45.005," 2007.

\bibitem{Martin}
R. K. Martin and M. Haker, ``Reduction of peak-to-average power ratio in transform domain
communication systems," \emph{IEEE Trans. Wireless Commun.}, vol. 8, pp. 4400-4405, Sept. 2009.

\bibitem{GOLAY61}
M. J. E. Golay, ``Complementary series," \emph{IRE Trans. Inf. Theory}, vol. IT-7, no. 2, pp. 82-87, Apr. 1961.

\bibitem{Liu-TIT2013}
Z. Liu, Y. Li, and Y. L. Guan, ``New constructions of general QAM Golay complementary sequences," \emph{IEEE Trans. Inf. Theory},  vol. 59, no. 11, pp. 7684-7692, Nov. 2013.

\bibitem{TSENG72}
C. Tseng and C. Liu, ``Complementary sets of sequences," \emph{IEEE Trans. Inf. Theory,} vol. IT-18, no. 5, pp. 644-665, Sept. 1972.

\bibitem{Liu-TCOM2014}
Z. Liu, Y. L. Guan, and U. Parampalli, ``New complete complementary codes for peak-to-mean power control in multi-carrier CDMA," \emph{IEEE Trans. Commun.}, vol. 62, no. 3, pp. 1105-113, Mar. 2014.

\bibitem{Liu-TCOM2011}%\label{Liu-TCOM2011}
Z. Liu, Y. L. Guan, B. C. Ng, and H. H. Chen, ``{Correlation and set size bounds of complementary sequences with low correlation zones}," \emph{IEEE Trans. Commun.}, vol. 59, no. 12, pp. 3285-3289, Dec. 2011.

\bibitem{Liu-WCL2013}
Z. Liu, U. Parampalli, Y. L. Guan and S. Bozta\c{s}, ``{Constructions of optimal and near-optimal quasi-complementary sequence sets from Singer difference sets}," \emph{IEEE Wireless Commun. Letters}, vol. 2, no. 5, pp. 487-490, Oct. 2013.

\bibitem{Liu-IT2013}
Z. Liu, Y. L. Guan and W. H. Mow, ``A tighter correlation lower bound for quasi-complementary sequence sets," \emph{IEEE Trans. Inf. Theory}, , vol. 60, no. 1, pp. 388-396, Jan. 2014.

\end{thebibliography}
\end{document}